\documentclass{article}
\usepackage{amssymb,amsmath,bm}
\usepackage{amsthm,amsfonts}
\usepackage{color}

\usepackage{latexsym,amscd}
\usepackage{graphicx,color,subfigure}
\usepackage{comment}
\usepackage{float}
\usepackage{mathrsfs}
\usepackage{rotating}
\usepackage{multirow}

\newcommand\bone{\mathbf{1}}

\newcommand{\R}{\mathbb{R}}
\newcommand{\RR}{\mathbb{R}}

\newtheorem{theorem}{Theorem}[section]
\newtheorem{corollary}[theorem]{Corollary}

\newtheorem{remark}[theorem]{Remark}

\definecolor{brilliantrose}{rgb}{1.0, 0.33, 0.64}
\definecolor{amber}{rgb}{1.0, 0.75, 0.0}
\definecolor{amethyst}{rgb}{0.6, 0.4, 0.8}
\definecolor{carrotorange}{rgb}{0.93, 0.57, 0.13}

\title{A Theory for Backtrack-Downweighted  Walks}
\author{Francesca Arrigo, Desmond J. Higham and Vanni Noferini}
\date{}

\author{
Francesca Arrigo\thanks{Department of Mathematics and Statistics, University of Strathclyde, Glasgow, UK, G1 1XH}
\and
Desmond J. Higham\thanks{School of Mathematics, University of Edinburgh,
James Clerk Maxwell Building,
Edinburgh, UK, EH9 3FD} \and 
Vanni Noferini\thanks{Aalto University, Department of Mathematics and Systems Analysis, P.O. Box 11100, FI-00076, Aalto, Finland}
}

\begin{document}

\maketitle

 \begin{abstract}
We develop a complete theory 
for the combinatorics of 
walk-counting on a directed graph 
in the case where each
backtracking step is downweighted
by a given factor.
By deriving expressions for the 
associated generating functions, we  
also obtain 
linear systems for computing 
centrality measures in this setting.
In particular,   
 we show that
 backtrack-downweighted 
  Katz-style network 
  centrality can be computed at the same cost 
   as standard Katz.
   Studying the limit of this 
    centrality measure at its radius of convergence 
    also leads to a new expression for 
     backtrack-downweighted  eigenvector 
     centrality that generalizes
     previous work to the case where directed edges are present.
        The new theory allows us to 
       combine advantages 
       of standard and nonbacktracking 
       cases, avoiding localization while accounting for tree-like 
        structures. 
     We illustrate the behaviour of the
       backtrack-downweighted centrality measure on both synthetic
      and real networks. 
 \end{abstract}

\section{Motivation}
\label{sec:mot}

\subsection{Nonbacktracking Walks}
Many concepts in network science are built on the notion of walks.
We may study the transient or long term behaviour 
of random walks of a certain form.
Or we may consider the combinatorics of all distinct walks of a given type.
Such ideas, which lie at intersection between 
graph theory, 
combinatorics and 
applied linear algebra, 
 form the basis of effective algorithms for 
summarizing network properties
\cite{bookKE,Newmanbook}.

Our focus here is on the definition and analysis of a 
new type of combinatoric walk-count which interpolates between 
the classical and nonbacktracking cases.
In the classical setting, a walker may continue by traversing any edge pointing out of the current node.
In the nonbacktracking setting, the walker 
must never continue along the reverse 
 of the edge on which they arrived. 
 Intuitively, eliminating backtracking forces the walker to 
 explore the network more widely.
 More concretely, it has been shown to offer benefits
 in 
 centrality measurement
\cite{AGHN17a,AGHN17b,CFGGP19,GHN18,MZN14,PC16,TCTE20}, 
community detection \cite{Kawa16,KMMNSZZ13,New13,PGNRU20,SH15}, 
network comparison and alignment 
\cite{MOW17,TSE19} 
and in the study of related issues concerning 
optimal percolation \cite{MM15,MMBMM16} and the spread of
epidemics \cite{MR20}.

Nonbacktracking also plays an important role in a number of 
seemingly unrelated scientific fields,
including 
spectral graph theory \cite{AFH15,H07}, 
 number theory \cite{Te13},
discrete mathematics \cite{BL70,HST,ST96},
quantum chaos \cite{Sm07},
random matrix theory \cite{So07}, and 
computer science \cite{SKZ14,WF09}.
Hence, our results also have potential for impact outside network science.

\subsection{Downweighting rather than Eliminating} \label{sec:down}
Our work can be motivated by two issues
\begin{description}
\item[a.] Treating all walks equally leads to centrality measures that  
suffer from localization---placing most emphasis on a small subset of nodes and 
struggling to distinguish between the remainder. 
This issue can be associated with an accumulation of 
backtracking walks 
\cite{AGHN17a,MZN14}. 
\item[b.] Completely eliminating backtracking walks, however, 
may overlook 
some features, notably the existence of trees
\cite{New13}. 
 \end{description}
 For this reason, 
we will consider a more general regime where  
any backtracking step during a walk is downweighted
by some factor, $0 \le \theta \le  1$. So the extremes of
$\theta  = 1$
and 
$\theta  = 0$
correspond to 
standard and nonbacktracking walk counts, respectively.

We are concerned with the combinatorics
of such  backtrack-downweighted walks---we seek a formula for the number of distinct
walks of each length in a given graph. 
We then study the associated generating functions in order to
produce  Katz-style network centrality measures.
Moreover, by considering how the resolvent-based generating function behaves at its radius of convergence, we also arrive at a corresponding eigenvector centrality.

We note that in \cite{CFGGP19} the related concept of 
alpha-nonbacktracking\footnote{We prefer to use a different symbol,
$\theta$, for the backtrack-downweighting parameter, since in our context 
$\alpha$ is traditionally used for attenuation.}
centrality was introduced. That work  
focused on the eigenvector setting, adapting the Hashimoto matrix construction,
and applied only to undirected networks.
Our work allows for directed networks and is built on a combinatoric
walk-counting approach that generalizes Katz centrality.
Further, by working at the node level rather than the edge level, we are able to 
derive more computationally efficient measures, based on linear systems with the same dimension and sparsity as the original network.

We finish this section by describing how the manuscript is organized, and 
pointing out how previous results are generalized.
In section~\ref{sec:bg} we introduce the required
background concepts.
Section~\ref{sec:btdw}
defines 
backtrack-downweighted walks
and 
in Theorem~\ref{thm:qres} we 
derive a general  four-term recurrence
that allows us to count them.
This result
extends the fully nonbacktracking version from
\cite{BL70}; see also \cite{TP09}.
Section~\ref{sec:res} concerns the standard generating function.
Corollary~\ref{cor:qKatz} gives a linear system for the 
associated Katz-style centrality measure,
generalizing \cite[equation (3.3)]{AGHN17a}.
In section~\ref{sec:squid} we give results on a small graph 
that illustrate the  two motivational issues 
\textbf{a} and \textbf{b} above.
(This graph has an interesting spectral property that
is described in Remark~\ref{rem:squid}.)
We 
also give comparative results 
on the star 
graph and 
on 
regular graphs. 
In section~\ref{sec:spec} we consider the limit
as the Katz parameter
approaches
its upper value, and thereby, in Theorem~\ref{thm:spec}, derive 
an eigenvector centrality measure. This result extends
 the measure in 
 \cite{CFGGP19} to the case of directed graphs.
 Section~\ref{sec:expo}
 shows how the recurrence from
 Theorem~\ref{thm:qres} 
 can be used to compute generating functions based on general power series.
 Here we find it necessary to work with  block matrices of three times the dimension of the original adjacency matrix; see Theorem~\ref{thm:genfun}, which extends
 \cite[Theorem~5.2]{AGHN17b}.
 In section~\ref{sec:compex} we give results on data from the London Underground
train network, and we argue that the backtrack-downweighting parameter 
provides a useful means to mitigate localization while maintaining 
 correlation with passenger usage. 
We finish with a brief discussion in section~\ref{sec:disc}.

\section{Background and Notation}
\label{sec:bg}

We consider an unweighted, directed network 
with $n$ nodes. We let $A \in \R^{n \times n}$ 
denote the adjacency matrix, so $a_{ij} = 1$ if there is an edge from $i$ to $j$ and
$a_{ij} = 0$ otherwise. There are no self-loops, so $a_{ii} = 0$.
A \emph{walk of length $k$} from node $i$ to node $j$
is a sequence of nodes $i=i_0, i_1, i_2, \ldots, i_k = j$ such that each edge
from $i_s$ to $i_{s+1}$ exists. Note that the nodes in the sequence 
are not required to be distinct; the walk may revisit nodes and edges.
It follows directly from the definition of matrix multiplication that 
$(A^{k})_{ij}$ counts the number of distinct walks of length $k$ 
from $i$ to $j$ \cite{bookKE}.

Katz \cite{Katz53} used this walk-counting expression as the basis for a centrality measure.
Here, we compute a value $x_i > 0$ that quantifies the importance of node $i$, with a larger value indicating greater importance.
Katz centrality uses 
\begin{equation}
x_i = 1 + \sum_{k=1}^{\infty}  \sum_{j=1}^{n}\alpha^k (A^k)_{ij}.
\label{eq:katzi}
\end{equation}
Here, (up to a convenient constant unit shift) 
the centrality of node $i$ is given by the 
total number of walks from node $i$ to every node, with a walk of length 
$k$ weighted by a factor  $\alpha^k$, where $\alpha>0$ is a real parameter.
  This series converges for $\alpha < 1/\rho(A)$, where $\rho(\cdot)$ denotes the
   spectral radius, and we may use the matrix-vector notation 
   \begin{equation}
    (I - \alpha A) \mathbf{x} = \mathbf{1},
    \label{eq:katz}
    \end{equation}
    where $\mathbf{1}  \in \R^{n}$  has all elements equal to one.

A \emph{nonbacktracking walk of length $k$}
from node $i$ to node $j$
is a sequence of nodes $i=i_0, i_1, i_2, \ldots, i_k=j$ such that each edge
from $i_s$ to $i_{s+1}$ exists and
we never have $i_{s} = i_{s+2}$. In words, after leaving a node we must not 
return to it immediately.
Now, let $p_k(A) \in \R^{n \times n}$ be such that 
$(p_k(A))_{ij}$ records the number of distinct nonbacktracking
 walks of length $k$ 
from $i$ to $j$.
It is straightforward to show that 
\[
p_1(A) = A, \quad \text{and} \quad p_2(A) = A^2 - D,
\]
where $D \in \R^{n \times n}$ is the diagonal matrix whose entries are $d_{ii} = (A^2)_{ii}$.
Setting 
$p_0(A) = I$
for convenience,
it was shown in
\cite{BL70}
that
the matrices 
$p_k(A)$ satisfy the following four-term recurrence
\begin{equation}
p_{k+1}(A) = p_k(A)A + p_{k-1}(A)(I-D) - p_{k-2}(A)(A-S),
\quad 
\text{for~} k = 2,3,\ldots, 
\label{eq:pkrec}
\end{equation}
where $S \in \R^{n \times n}$ is such that 
$s_{ij} = a_{ij} a_{ji}$. 
See \cite{TP09} for an alternative, linear algebraic, proof.

 We note that in the extreme case of a directed network
 for which no edges are reciprocated,
 that is $a_{ij} = 1 \Rightarrow a_{ji} = 0$, there is no opportunity for any walk to backtrack; all walks are nonbacktracking.
 In this case, we have 
 $D = 0$ and $S = 0$, and 
$p_k(A)$ recovers  
the classical walk count $A^{k}$.

 Motivated by (\ref{eq:katzi}),
the nonbacktracking Katz analogue  
  \begin{equation}
  x_i = 1 + \sum_{k=1}^{\infty}  \sum_{j=1}^{n}\alpha^k (p_k(A))_{ij} 
\label{eq:katzinbtw}
    \end{equation}
    was introduced in \cite{AGHN17a}.
    Here, centrality is computed via weighted combinations of
    nonbacktracking (rather than classical) walks.  
    By first deriving an expression for the generating function
    $\sum_{k=0}^{\infty} \alpha^k p_k(A)$,
     it was shown that
     (\ref{eq:katzinbtw}) solves the linear system
      \begin{equation}
    \left(I - \alpha A  + \alpha^2 (D-I) - \alpha^3 (A-S) \right) \mathbf{x} = 
  \left( 1 - \alpha^2 \right)    
    \mathbf{1}. 
    \label{eq:katznbtw}
    \end{equation}
    In general, the radius of convergence for the 
series in (\ref{eq:katzinbtw}), and hence the range of valid 
$\alpha$ values in (\ref{eq:katzinbtw}), is governed by the spectrum of a 
three-by-three block matrix; see \cite[Theorem~5.1]{AGHN17a} and 
section~\ref{sec:spec}.

   The
       nonbacktracking version of Katz centrality, (\ref{eq:katzinbtw}), 
       was first defined and analyzed in 
           \cite{GHN18} for     
       undirected networks. In this undirected case,   
      as $\alpha$ approaches its upper limit the ranking induced by the 
      centrality measure in (\ref{eq:katznbtw}) generically tends 
    to that induced by the nonbacktracking       
      eigenvector centrality measure introduced in 
      \cite{MZN14}; see \cite[Theorem~10.2]{GHN18}.
      Taking the corresponding limit in 
      (\ref{eq:katznbtw}) 
      defines a computable 
      nonbacktracking       
      eigenvector centrality
       for the more general case of a directed network, 
       \cite[Theorem~6.1]{AGHN17a}.

\section{Backtrack-Downweighted Walk Counts}\label{sec:btdw}

We now consider an intermediate regime where backtracking is not completely 
eliminated, but rather the count for each walk is downweighted by 
$\theta^m$ where $0 \le \theta \le 1$ is a parameter and 
$m$ is the number of backtracking steps incurred during the walk.
Hence, $\theta =1$ corresponds to the classical walk count
 $A^k$ and 
 $\theta =0$ corresponds to the 
 nonbacktracking walk count from 
 $p_{k}(A)$ 
in (\ref{eq:pkrec}).
We will let $q_{k}(A)$ denote the resulting 
\emph{backtrack-downweighted walk}
(BTDW) count matrix, where, for brevity, 
the dependence of 
$q_k(A)$ on $\theta$ is not explicitly indicated. 
More precisely, 
$(q_k(A))_{ij}$ counts the number of distinct walks
of length $k$ 
from node $i$ to node $j$ with the following
proviso:
for each walk,  $i=i_0, i_1, i_2, \ldots, i_k=j$,
every occurrence of a backtracking step
($i_{s} = i_{s+2}$)
incurs a downweighting by a factor $\theta$.

To illustrate this idea, consider the directed graph in 
Figure~\ref{fig:graph5}.
Looking at walks of length four from node $1$ to node $5$, we have
\begin{itemize}
\item a walk $1 \to 2 \to 3 \to 4 \to 5$ with no backtracking,
\item a walk $1 \to 2 \to 3 \to 2 \to 5$ with one instance of backtracking,
\item a walk $1 \to 2 \to 5 \to 2 \to 5$ with two instances of backtracking.
\end{itemize}
Hence, $(q_{4}(A))_{15} = 1 + \theta + \theta^2$.
Continuing with these arguments, we find that 
\[
q_{4}(A) =
 \left[
  \begin{array}{ccccc}
     0 & 0 & \theta + \theta^2 & 0 & 1 + \theta + \theta^2 \\
      0 & 1 + 2 \theta^2 + 2 \theta^3 & 0 & \theta + \theta^2 & 0 \\
       0 & 0 & 1+ \theta + \theta^3 & 0 & 2 \theta + 2 \theta^2 \\
        0 & \theta + \theta^2 & 0 & 1 & 0 \\
       0 & 0 & 2 \theta^2 & 0 & 1 + \theta + \theta^3
       \end{array}
       \right].
       \]

\begin{figure}
\begin{center}
\includegraphics[width=0.25\textwidth]{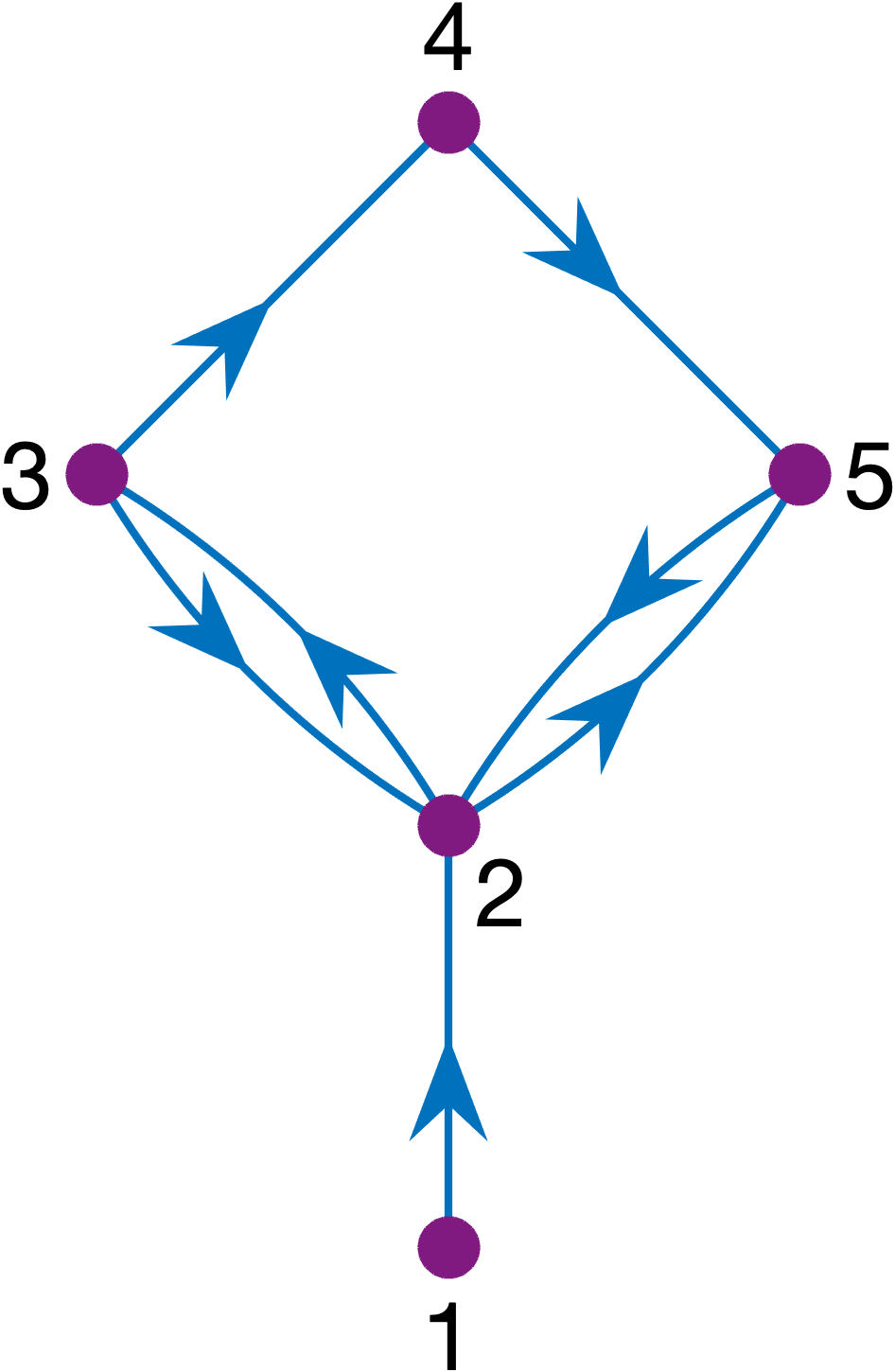}
\end{center}
\caption{Illustrative directed graph.}
\label{fig:graph5}
\end{figure}

The following theorem generalizes the recurrence 
(\ref{eq:pkrec}). 
For the statement of this theorem, and 
later results, we 
find it convenient to
let $\mu = 1 - \theta$. 

\begin{theorem} \label{thm:qres}
Letting $q_0(A) = I$, 
we have 
\[
q_1(A) = A, \qquad q_2(A) = A^2 - \mu D,
\]
and for $k\geq 2$ 
\begin{equation}
q_{k+1}(A) = q_k(A) A + \mu \, q_{k-1}(A) \left( \mu I - D\right)  
    -  \mu^2 \, q_{k-2}(A) \left(A - S \right),
    \label{eq:qrec}
\end{equation}
where $\mu = (1-\theta)$.
\end{theorem}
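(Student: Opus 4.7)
My proof plan is a combinatorial one, extending the argument of \cite{BL70} for \eqref{eq:pkrec} from the fully nonbacktracking case to the downweighted case. The core algebraic device is the elementary identity $\theta^{x} = 1 - \mu x$ for $x \in \{0,1\}$, which lets me split each backtrack indicator in the sum defining $(q_k(A))_{ij}$ into a ``generic'' part and a single $\mu$-correction. The base cases $q_0(A)=I$ and $q_1(A)=A$ are immediate, and for $q_2(A)$ I observe that every length-$2$ walk from $i$ to $i$ is of the form $(i,v,i)$, which is necessarily a backtrack and hence carries weight $\theta$; off-diagonal entries are forced to be non-backtracking, so $q_2(A) = A^2 - \mu D$.

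For the recurrence \eqref{eq:qrec}, my plan is to apply the identity above at three successive positions of a walk, producing three ``peelings''. The first peeling compares $q_{k+1}(A)$ with $q_k(A)A$: since the product ignores the possible backtrack at position $k-1$, I obtain $q_{k+1}(A) = q_k(A)A - \mu R^{(k+1)}$, where $R^{(k+1)}_{ij}$ is the $\theta$-weighted count of length-$(k+1)$ walks from $i$ to $j$ with $i_{k+1} = i_{k-1} = j$. Each such walk ends in a back-and-forth $(j,v,j)$ with $s_{jv}=1$, and a second peeling on the possible backtrack at position $k-2$ (i.e.\ whether $v=i_{k-2}$) gives $R^{(k+1)} = q_{k-1}(A)D - \mu T^{(k)}$, where $T^{(k)}_{ij}$ is the $\theta$-weighted count of length-$(k-1)$ walks from $i$ to $j$ whose penultimate node $i_{k-2}$ satisfies $s_{j,i_{k-2}}=1$. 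A third peeling of the last edge of such a walk, splitting on the possible backtrack at position $k-3$, yields $T^{(k)} = q_{k-2}(A)S - \mu \Sigma^{(k)}$, where $\Sigma^{(k)}_{ij}$ is a $\theta$-weighted count of length-$(k-2)$ walks from $i$ to some $u$ with $s_{ju}=1$ and $i_{k-3}=j$.

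The closing step is to recognize that $\Sigma^{(k)} = R^{(k-1)}$: extending any contributing walk of length $k-2$ by the edge $u\to j$ (guaranteed by $s_{ju}=1$) defines a weight-preserving bijection onto the walks contributing to $R^{(k-1)}$, since the backtrack indicators at positions $0,\dots,k-4$ depend only on the first $k-1$ vertices. Reapplying the first peeling one order earlier gives $\mu R^{(k-1)} = q_{k-2}(A)A - q_{k-1}(A)$, so $T^{(k)} = q_{k-1}(A) - q_{k-2}(A)(A-S)$; substituting back into the two earlier identities produces \eqref{eq:qrec}. The main obstacle is the bookkeeping in the three peelings and verification of the $\Sigma^{(k)} = R^{(k-1)}$ bijection; boundary cases with undefined indices (e.g.\ $i_{-1}$ when $k=2$) are dealt with by the vacuous convention $R^{(1)} = 0$, which is consistent with $q_1(A) = q_0(A)A$.
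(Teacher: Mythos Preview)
Your proof is correct and follows essentially the same combinatorial strategy as the paper: start from $q_k(A)A$, which miscounts only the final backtrack, and apply successive corrections that eventually isolate the quantity $q_{k-1}(A) - q_{k-2}(A)(A-S)$ counting walks that end $\ell,j,\ell,j$. Your presentation is somewhat more systematic than the paper's: where the paper argues narratively (``subtract the wrongly weighted contribution, add it back with the right factor of $\theta$ or $\theta^2$'') and splits on whether the reciprocated edge $j\to\ell$ exists, you encode each correction uniformly via the identity $\theta^{x}=1-\mu x$ for $x\in\{0,1\}$, introduce the explicit intermediates $R^{(k+1)},T^{(k)},\Sigma^{(k)}$, and close the argument with the bijection $\Sigma^{(k)}=R^{(k-1)}$ together with the level-$(k{-}1)$ instance of the first peeling. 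This buys you a cleaner bookkeeping trail and avoids the paper's slightly delicate accounting of which factors ($\theta$, $\theta^2$, $1-\theta$) have or have not yet been applied; the paper's version, on the other hand, is perhaps more immediately readable as a walk-by-walk story. Substantively the two arguments are the same.
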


\begin{proof}

The identity $q_1(A) = A$ follows immediately since no walk of length one can backtrack.
Backtracking walks of length two are precisely closed walks of length two.
Hence, the off-diagonal elements of
$q_2(A)$ match those of 
 $A^2$, and 
the diagonal  
elements of
$q_2(A)$ correspond to those of 
$A^2$ scaled by $\theta$. This gives 
 $q_2(A) = A^2 - D + \theta D$.

We proceed by induction.
Assume that 
$q_s(A)$ correctly counts  BTDWs of length $s$ for $s \le k$.
We start with the expression
\begin{equation}
q_k(A) A.
\label{eq:try1}
\end{equation}
Postmultiplying by $A$ in this way corresponds to adding an edge to the end of the walk: 
the $(i,j)$ entry of $q_k(A) A$ deals with walks from $i$ to $j$ of length $k+1$ where any backtracking arising along the first $k$ edges has been correctly downweighted. 
We may associate 
$(q_k(A) A)_{ij}$ 
with the schematic 
\begin{equation}
      \underbrace{ i \,  \star \, \star \, \cdots  \, \star \, \star}_{k\checkmark} \, j.
      \label{eq:s1}
\end{equation}
Here $i$ and $j$ are the first and last nodes in the walk of length $k+1$.
The star symbols denote arbitrary nodes.
The presence of $k\checkmark$ indicates that, in every walk
under consideration, the first $k$ edges along that walk have been 
correctly downweighted (since they are accounted for by
  $q_k(A)$).
  However, any backtracking caused by the final edge has not been 
  correctly 
  downweighted. 
  The walks whose weights we must adjust have the form
  \begin{equation}
    \underbrace{ i \,  \star \,   \star \,  \cdots   \star \, \, \star \, j \, \ell  }_{k\checkmark} \, j. 
      \label{eq:s2}
\end{equation}
 To deal with such walks we note that the term
$(q_{k-1}(A) D)_{ij}$ is associated with walks of the form
  \begin{equation}
    \underbrace{ i \,  \star \,   \star \, \cdots \,  \star \,  \star \, j  }_{(k-1)\checkmark} \, \ell \, j. 
\label{eq:s3}
\end{equation}
These walks appeared in our original expression, $q_k(A) A$ in (\ref{eq:try1}), without 
any downweighting of the final backtracking step. 
We may therefore remove the contribution from all such walks to 
this expression, to
give 
 $q_k(A) A - q_{k-1}(A) D$, and then add this contribution  back in with the 
 required extra downweighting factor $\theta$, leading to the expression 
 \begin{equation}
 q_k(A) A + (\theta - 1)q_{k-1}(A) D.
 \label{eq:try2}
\end{equation}

 However,  the walks represented by (\ref{eq:s2}) and 
 (\ref{eq:s3}) are not the same, because we have not yet properly dealt with 
 walks of the form 
   \begin{equation}
    \underbrace{ i \,  \star \,   \star \,  \cdots  \star \,  \, \ell \, j \, \ell  }_{k \checkmark} \, j.
\label{eq:s4}
\end{equation}
To proceed, we note that 
$(q_{k-1}(A))_{i j}$ contains the correct BTDW count for walks of length $k-1$
from $i$ to $j$; that is,
  \begin{equation}
    \underbrace{ i \,  \star \,   \star \,  \cdots  \star \,   j  }_{(k -1)\checkmark}.
\label{eq:s5}
\end{equation}
To see how these may extend to walks of the form 
  \[
    \underbrace{ i \,  \star \,   \star \,  \cdots  \star \,  \ell \,  j  }_{(k -1)\checkmark} \, \ell \, j 
\]
we consider two cases
\begin{description}
\item[1.]
if the reciprocated edge from $j$ to $\ell$ exists, then there is exactly one such walk,
\item[2.]
if the reciprocated edge from $j$ to $\ell$ does not exist, then there is no such walk.
\end{description}
The quantity 
$(q_{k-1}(A) - q_{k-2}(A)(A-S))_{ij}$ therefore accounts for these walks, but with a scaling that does not allow for the final two backtracking steps.

Hence the correctly weighted contribution 
to $q_{k+1}(A)$ from walks of the form  
(\ref{eq:s4}) 
is $\theta^2 [  q_{k-1}(A) -    q_{k-2}(A) (A-S)]_{\ell j}$. The factor 
$\theta^2$
is required because of the two
final backtracking steps, which are not downweighted in $q_{k-1}(A) -   q_{k-2}(A) (A-S)$. 
In order to make (\ref{eq:try2}) correct we therefore need to
\begin{itemize}
 \item  subtract the amount 
 $(\theta - 1 ) [ q_{k-1}(A) -    q_{k-2}(A) (A-S)]$ 
 in order to compensate for the fact that these walks were incorrectly scaled by a factor
 $1-\theta$, rather than $\theta^2$, in $q_{k-1}(A) D$,
 \item subtract the amount $\theta [ q_{k-1}(A) -   q_{k-2}(A) (A-S)]$ in order to compensate for the
 fact that these walks were incorrectly scaled by a factor $\theta$ rather than 
 $\theta^2$ in $q_k(A)A$,
 \item (having now removed the contribution from these walks) 
  add in $\theta^2  [q_{k-1}(A) -   q_{k-2}(A) (A-S)]$ so that the two final backtracking steps 
  are accounted for properly.
  \end{itemize}
 This leads us to the relation 
\[
 q_{k+1}(A) =   q_k(A) A + (\theta - 1)q_{k-1}(A) D
 + (1 - \theta)^2 
 [q_{k-1}(A) -    q_{k-2}(A) (A-S)],
 \]
giving the required result.
\end{proof}

The following corollary gives an alternative version of the recurrence.

\begin{corollary}
For $k\geq 2$,
the BTDW count matrices $q_{k}(A)$ also satisfy the recurrence 
\begin{equation}
q_{k+1}(A) =  A q_k(A) + \mu \, \left( \mu  I - D\right)   q_{k-1}(A)  
    -  \mu^2 \, \left(A - S \right) q_{k-2}(A).
    \label{eq:qrecb}
\end{equation}
\end{corollary}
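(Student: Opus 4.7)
The plan is to derive (\ref{eq:qrecb}) from Theorem~\ref{thm:qres} by exploiting the symmetry of backtrack-downweighted walks under reversal, rather than re-running the (lengthy) inductive bookkeeping of the previous proof. Concretely, given a walk $i_0, i_1, \ldots, i_k$ in the graph with adjacency matrix $A$, the reversed sequence $i_k, i_{k-1}, \ldots, i_0$ is a walk in the graph with adjacency matrix $A^T$. Under this reversal, a backtracking step at position $s$ (that is, $i_s = i_{s+2}$) corresponds to a backtracking step at position $k-s-2$ in the reversed walk; in particular the total number of backtracking steps is preserved. This sets up a weight-preserving bijection between the walks counted by $(q_k(A))_{ij}$ and those counted by $(q_k(A^T))_{ji}$, yielding
\begin{equation*}
q_k(A) = q_k(A^T)^T.
\end{equation*}

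Next I would verify that the auxiliary matrices $D$ and $S$ behave nicely under the substitution $A \mapsto A^T$. Since $D$ is diagonal with entries $(A^2)_{ii}$ and $(A^T)^2 = (A^2)^T$ has the same diagonal as $A^2$, the matrix $D$ is unchanged when computed from $A^T$; because $s_{ij} = a_{ij} a_{ji}$ is symmetric in $i$ and $j$, the matrix $S$ is likewise unchanged. Moreover $D^T = D$ (diagonal) and $S^T = S$, so each of them is its own transpose.

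The corollary then drops out in one line. Applying Theorem~\ref{thm:qres} with $A^T$ in place of $A$ gives
\begin{equation*}
q_{k+1}(A^T) = q_k(A^T)\, A^T + \mu\, q_{k-1}(A^T)(\mu I - D) - \mu^2\, q_{k-2}(A^T)(A^T - S).
\end{equation*}
Transposing both sides and invoking $q_j(A^T)^T = q_j(A)$, $D^T = D$, and $S^T = S$ converts each right-multiplication into the corresponding left-multiplication and reproduces (\ref{eq:qrecb}) exactly.

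There is no real obstacle here; the only point that deserves attention is the claim that reversal is a backtracking-preserving bijection between the two walk sets, which reduces to observing that the three-term pattern $i_s = i_{s+2}$ is invariant under the reindexing $s \mapsto k-s-2$. An alternative, but notationally much heavier, route would be to mimic the inductive argument of Theorem~\ref{thm:qres} starting from $A\, q_k(A)$ and prepending (rather than appending) an edge at each step; the transpose trick is strictly cleaner and I would prefer it.
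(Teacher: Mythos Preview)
Your proposal is correct and is essentially the paper's own argument: the paper likewise observes that reversing every edge gives $q_k(A^T) = (q_k(A))^T$, applies the recurrence of Theorem~\ref{thm:qres} to $A^T$, and transposes to obtain (\ref{eq:qrecb}). The paper also mentions, as you do, the alternative of rerunning the inductive bookkeeping by prepending rather than appending an edge; your write-up is in fact slightly more careful than the paper's in explicitly checking that $D$ and $S$ are fixed under $A \mapsto A^T$ and are symmetric.
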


\begin{proof}
The recurrence (\ref{eq:qrecb}) can be established by adapting the proof of 
Theorem~\ref{thm:qres}.
The main difference is to add the $(k+1)$st edge at the beginning of the walk,
rather than the end;  
so, instead of (\ref{eq:try1}), we begin with $A  q_k(A)$.  

However, a higher-level argument can also be used.
Reversing the direction of every
edge in a graph, the BTDW count
from   
from $i$ to $j$ becomes the BTDW count from $j$ to $i$.
Hence, $q_{k}(A^T) = ( q_{k}(A) )^T$.
Writing the recurrence (\ref{eq:qrec}) for $A^T$ and taking the transpose, we then arrive at  (\ref{eq:qrecb}).
\end{proof}

 \begin{remark}
 It is straightforward to confirm that 
$\theta = 0$ in  (\ref{eq:qrec}), that is, elimination of all backtracking walks, 
leads to the relation (\ref{eq:pkrec}).
Also, $\theta = 1$, treating al walks equally, gives the classical count 
$A^{k}$ for walks of length $k$.
 \end{remark}

 \section{Resolvent}\label{sec:res}
 
 To study Katz-style centrality, we define the generating function
  \begin{equation}
\Psi(A) = \sum_{k=0}^{\infty}  \alpha^k q_{k}(A).
\label{eq:psi}
\end{equation}

\begin{theorem} \label{thm:genres}
If $\alpha \ge 0$ is within the radius of convergence of the 
generating function
$\Psi(A)$ in (\ref{eq:psi}) then
\[
\Psi(A) \left[1 - \alpha A - \mu \alpha^2 
   \left( \mu I - D \right)
    +
     \mu^2 \alpha^3 \left(A-S\right) \right]
      = \left(
           1 - \mu^2 \alpha^2 \right) I,
           \]
where we recall that $\mu = 1-\theta$.
\end{theorem}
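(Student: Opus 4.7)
The plan is to convert the four-term recurrence (\ref{eq:qrec}) from Theorem~\ref{thm:qres} into a functional equation for $\Psi(A)$ via a standard shift-and-sum manipulation. Concretely, I would multiply both sides of (\ref{eq:qrec}) by $\alpha^{k+1}$ and sum over $k \ge 2$; each of the four resulting power series will re-assemble, up to a suitable power of $\alpha$, into a truncated copy of $\Psi(A)$. Convergence of all rearrangements is guaranteed inside the radius of convergence, which is what the hypothesis on $\alpha$ provides.

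Performing the re-indexing, the left side $\sum_{k\ge 2}\alpha^{k+1}q_{k+1}(A)$ becomes $\Psi(A) - q_0(A) - \alpha q_1(A) - \alpha^2 q_2(A)$, while the three terms on the right become $\alpha(\Psi(A) - q_0(A) - \alpha q_1(A))A$, $\mu\alpha^2(\Psi(A) - q_0(A))(\mu I - D)$, and $-\mu^2\alpha^3 \Psi(A)(A-S)$. The differing index offsets in (\ref{eq:qrec}), namely the appearance of $q_k$, $q_{k-1}$, and $q_{k-2}$, dictate precisely which low-order initial terms must be peeled off from each series. Moving every occurrence of $\Psi(A)$ onto the left then exposes the bracketed factor $[I - \alpha A - \mu\alpha^2(\mu I - D) + \mu^2\alpha^3(A-S)]$ exactly as required by the theorem statement.

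What remains is to verify that the leftover constant matrices collapse to $(1-\mu^2\alpha^2)I$. This is where the explicit initial values $q_0(A) = I$, $q_1(A) = A$, and $q_2(A) = A^2 - \mu D$ supplied by Theorem~\ref{thm:qres} enter: the $-\alpha^2 q_2(A)$ contribution on the left cancels the $-\alpha^2 A^2$ term produced on the right (coming from $-\alpha q_1(A)\cdot\alpha A$), leaving a residual $-\mu\alpha^2 D$. This residual then combines with the $-\mu\alpha^2(\mu I - D)$ produced by the second series to yield $-\mu^2\alpha^2 I$, and together with the leftover $+I$ we recover $(1-\mu^2\alpha^2)I$.

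The only genuine obstacle is bookkeeping: correctly tracking which initial terms are absent from each shifted series, and confirming that (\ref{eq:qrec}) really does hold from $k=2$ onwards, so that no additional base-case adjustment is required beyond the $q_0, q_1, q_2$ values that are already used. Everything else is routine manipulation of absolutely convergent matrix-valued power series.
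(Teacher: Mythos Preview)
Your proposal is correct and follows essentially the same route as the paper's own proof: multiply the recurrence (\ref{eq:qrec}) by $\alpha^{k+1}$, sum from $k=2$, re-express each shifted sum as $\Psi(A)$ minus the appropriate initial terms, and then substitute $q_0(A)=I$, $q_1(A)=A$, $q_2(A)=A^2-\mu D$ to simplify the residual constants. Your explicit bookkeeping of how the leftover terms collapse to $(1-\mu^2\alpha^2)I$ is in fact more detailed than the paper, which simply states that the result follows ``after rearrangement and simplification.''
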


\begin{proof}  
Multiplying by $\alpha^{k+1}$ in (\ref{eq:qrec}) and summing from $k=2$ to
$\infty$ gives
\begin{eqnarray*}
  \sum_{k=2}^{\infty}
     \alpha^{k+2} q_{k+1}(A)
       &=&
        \alpha 
      \sum_{k=2}^{\infty}
     \alpha^{k} q_{k}(A)  A
      + 
       \mu \alpha^2 
         \sum_{k=2}^{\infty}
     \alpha^{k-1} q_{k-1}(A)   \left( \mu I - D \right)
     \\
     && \mbox{} - \mu^2 
       \alpha^3
         \sum_{k=2}^{\infty}
     \alpha^{k-2} q_{k-2}(A)   \left(A-S\right).
  \end{eqnarray*}   
  Hence,
        \begin{eqnarray*}
        \Psi(A)
         - \alpha^2 q_{2}(A) - \alpha q_1(A) - q_0(A) 
          &=&
                \alpha \left( \Psi(A) - \alpha q_1(A) - q_0(A) \right)  A\\
                &&  \mbox{} +
                   \mu \alpha^2  \left( \Psi(A) - q_0 \right) \left( \mu I - D \right) \\
                      && \mbox{} -
                         \mu^2 \alpha^3   \Psi(A) \left(A-S\right).
                           \end{eqnarray*}  
                            Using the expressions for 
  $q_0(A)$, $q_1(A)$ and $q_2(A)$ in Theorem~\ref{thm:qres},   
 the result follows
                          after rearrangement and simplification. 
  \end{proof}

Following (\ref{eq:katzi})
and 
(\ref{eq:katzinbtw}), we define the BTDW Katz centrality
for node $i$ as
  \begin{equation}
  x_i = 1 + \sum_{k=1}^{\infty}  \sum_{j=1}^{n}\alpha^k (q_k(A))_{ij}.
\label{eq:katzibtdw}
    \end{equation}  
    We may then generalize (\ref{eq:katznbtw}) as follows. 
  
\begin{corollary} \label{cor:qKatz}  
The BTDW Katz centrality measure (\ref{eq:katzibtdw}) solves the linear system
  \begin{equation}
 \left[
    I - \alpha A - \mu \alpha^2   \left( \mu I - D \right) +
    \mu^2 \alpha^3 \left(A-S\right)  \right]
    \mathbf{x} 
     = 
       \left( 
          1 - \mu^2 \alpha^2 \right) 
          \mathbf{1}.
          \label{eq:katzbtdw}
    \end{equation}  
  \end{corollary}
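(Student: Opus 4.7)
The plan is to reduce the statement to a matrix identity for the generating function $\Psi(A)$ from Theorem~\ref{thm:genres}. The first step is to observe that the scalar centrality defined in (\ref{eq:katzibtdw}) is nothing but a row-sum of $\Psi(A)$: since $q_0(A)=I$, the constant $1$ in (\ref{eq:katzibtdw}) can be absorbed into the $k=0$ term of the sum, giving $\mathbf{x} = \Psi(A)\mathbf{1}$.

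Set $M = I - \alpha A - \mu\alpha^2(\mu I - D) + \mu^2\alpha^3(A-S)$ and $c = 1-\mu^2\alpha^2$. Theorem~\ref{thm:genres} provides $\Psi(A) M = c\,I$, but the linear system in the corollary is $M\mathbf{x} = c\,\mathbf{1}$, so what is actually needed is the \emph{reversed} factorization $M\Psi(A) = c\,I$, after which right-multiplication by $\mathbf{1}$ and the identification $\mathbf{x} = \Psi(A)\mathbf{1}$ yield (\ref{eq:katzbtdw}) at once.

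The cleanest way to obtain $M\Psi(A) = c\,I$ is to repeat the generating-function argument from the proof of Theorem~\ref{thm:genres}, but starting from the alternative recurrence (\ref{eq:qrecb}) in which $A$, $\mu I-D$ and $A-S$ appear as \emph{left} factors of $q_k(A)$, $q_{k-1}(A)$ and $q_{k-2}(A)$ respectively. Multiplying (\ref{eq:qrecb}) by $\alpha^{k+1}$, summing from $k=2$ to $\infty$, and substituting the initial expressions $q_0(A)=I$, $q_1(A)=A$, $q_2(A)=A^2-\mu D$ given by Theorem~\ref{thm:qres} leads, after the same rearrangement as in Theorem~\ref{thm:genres}, to $M\Psi(A) = c\,I$ directly. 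An entirely equivalent shortcut would note that $\Psi(A) \to I$ as $\alpha\to 0$, so $\Psi(A)$ is invertible in a neighbourhood of the origin; then $\Psi(A) M = cI$ already forces $M$ and $\Psi(A)$ to commute, and by analytic continuation this yields $M\Psi(A)=c\,I$ throughout the radius of convergence.

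The only genuine point of care is keeping track of the order of multiplication; beyond this the derivation is purely bookkeeping, and introduces no combinatorial content that is not already contained in Theorems~\ref{thm:qres} and~\ref{thm:genres}.
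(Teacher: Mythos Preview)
Your proposal is correct and follows essentially the same route as the paper: identify $\mathbf{x}=\Psi(A)\mathbf{1}$ and then invoke Theorem~\ref{thm:genres} to read off the coefficient matrix. The paper's proof is even terser---it simply writes $\Psi(A)^{-1}\mathbf{x}=\mathbf{1}$ and cites the theorem for $\Psi(A)^{-1}$---implicitly using the elementary fact that for square matrices $\Psi(A)M=cI$ with $c\neq 0$ already makes $c^{-1}M$ a two-sided inverse; your more explicit handling of the left/right factor order (via the alternative recurrence~(\ref{eq:qrecb}) or via invertibility) is sound but not strictly necessary, and the analytic-continuation step can be dropped entirely.
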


\begin{proof}
From (\ref{eq:katzibtdw}) we have 
$\Psi(A)^{-1}\mathbf{x} = \mathbf{1}$.
Theorem~\ref{thm:genfun} gives the required expression for 
$\Psi(A)^{-1}$.
\end{proof}

We note that the  coefficient matrix in (\ref{eq:katzbtdw}) has the same sparsity 
as the coefficient matrix in standard Katz
(\ref{eq:katz}).
This shows that (a) downweighting of backtracking walks can be incorporated at
no extra computational cost, and (b) it is 
therefore feasible to apply the measure to large, sparse networks.
Also, by construction, the 
radius of convergence in (\ref{eq:psi}) for a general 
$ 0 < \theta < 1$ must be bounded above and below by the corresponding 
radius of convergence when $\theta = 1$ and $\theta = 0$, respectively.

  \section{Squid, Star and Regular Graphs}\label{sec:squid}
  We now analyze specific simple examples that shed light on 
  how the 
   BTDW Katz centrality measure 
    in (\ref{eq:katzbtdw}) can perform differently to the 
   extreme cases of standard and 
   fully nonbacktracking Katz.
   We first consider the undirected graph with $11$ nodes  
    shown in Figure~\ref{fig:squid}.
    Due to its shape, we will refer this as the squid graph.
    Here, node 1 has the highest degree, but it could be argued that nodes 6 
     and 8, of lower degree, possess better quality connections. 
      In particular, node 1 is connected to four leaves, and the 
      subgraph consisting of nodes 1,2,3,4 and 5
    represents a tree hanging off the remainder of the graph.  
     In the context of community detection, it has been argued that  
       nonbacktracking measures will completely ``ignore'' the presence 
        of such trees, with undesirable consequences  \cite{New13}.
        Hence, in this centrality measurement setting it is of interest to
         see whether a similar effect arises. 
         Intuitively, if we count only nonbacktracking walks, then 
          the connections 1-2, 1-3, 1-4 and 1-5 possessed by node 1 should be 
           less valuable than the connections enjoyed by the 
             other nodes in the graph.
           Hence, the small $\theta$ regime   should not be favourable
           for node 1.

         \begin{figure}
  \begin{center}
\includegraphics[width=0.25\textwidth]{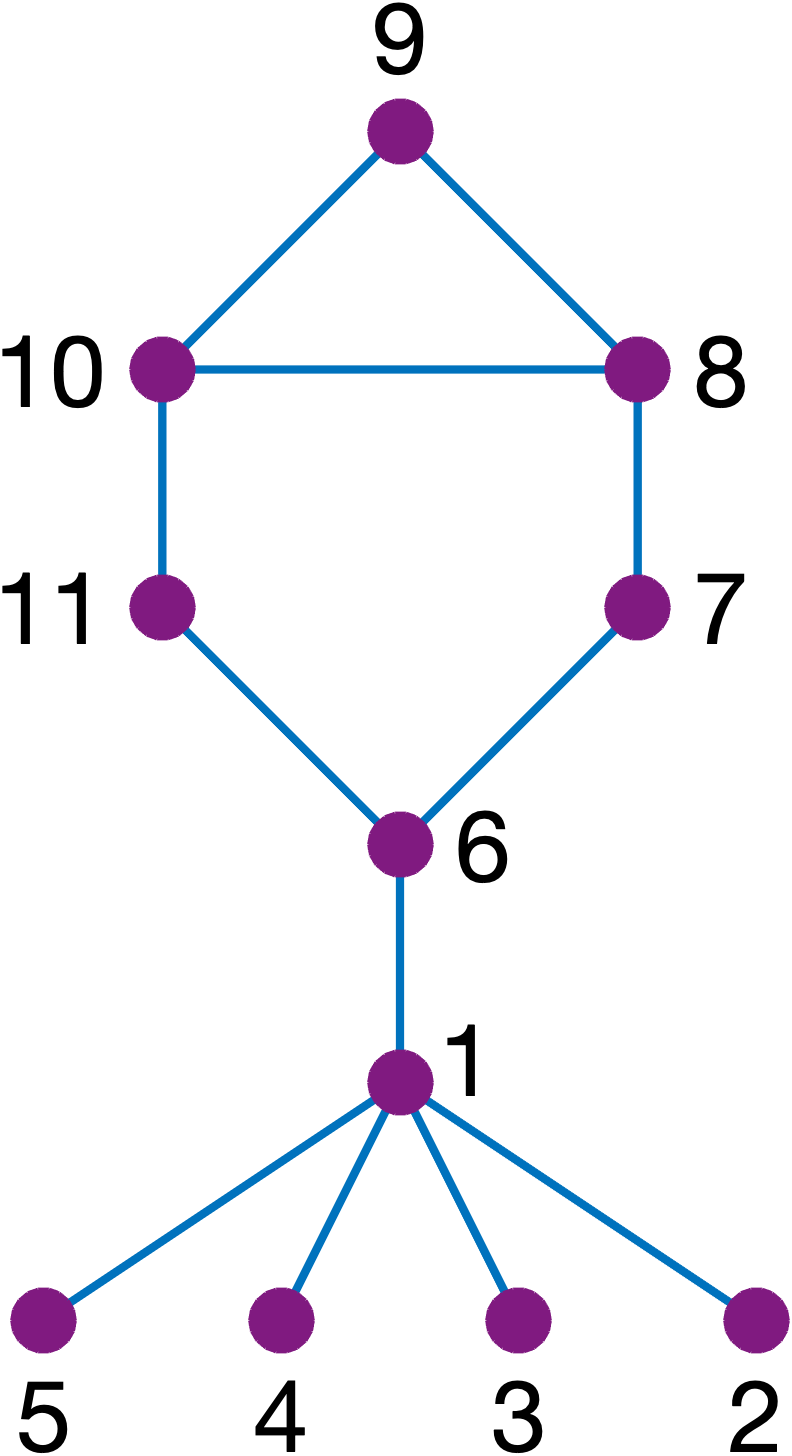}
\end{center}
\caption{The squid graph. 
}
\label{fig:squid}
\end{figure}

Figure~\ref{fig:res_squid} shows the 
 BTDW Katz centrality measure
 in (\ref{eq:katzbtdw}), normalized to have 
$\| \mathbf{x} \|_1 = 1$,  
  for nodes 1 (solid), 6 (dashed) and 8 (dotted).
Here, we used a fixed value of $\alpha = 0.99/\rho(A)$, which corresponds to 
$\alpha \approx 0.39$, and we show the centrality values
as $\theta$ ranges between $0$ and $1$.
 (Of course, by symmetry node 10 will always have the same centrality value as
node 8.)
The plot reveals a crossover effect, where 
a sufficiently small value of $\theta$, that is, sufficiently stringent 
downweighting of 
backtracking walks, causes node 1 to be ranked below nodes 6 and 8.
It is also of note that node 8, which is able to take part in 
a three-cycle, and is therefore involved in a relatively large
number of short nonbacktracking walks, is rated more highly 
than node 6 for small $\theta$. 
However, as $\theta$ increases,  
and hence the downweighting  of backtracking becomes less
severe,
node 8 is overtaken by 
 node 6, which arguably occupies a more 
central position in the graph.

 \begin{figure}
  \begin{center}
\includegraphics[width=0.5\textwidth]{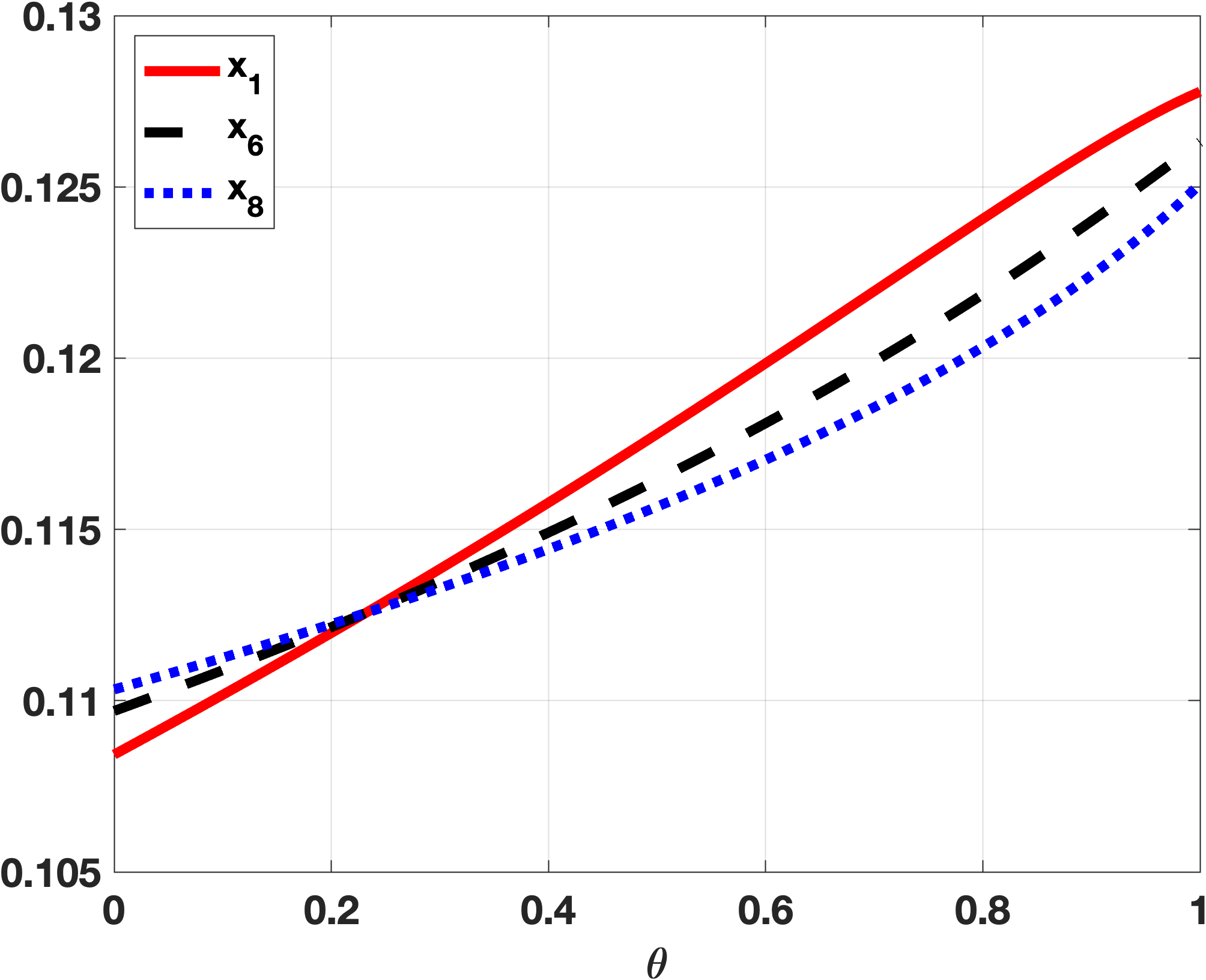}
\end{center}
\caption{Normalized
BTDW Katz centrality for nodes 
 nodes 1 (solid), 6 (dashed) and 8 (dotted), 
 as a function of $\theta$, 
 corresponding to the squid graph 
 in Figure~\ref{fig:res_squid}.
Here $\alpha$ is fixed at $0.99/\rho(A)$.
}
\label{fig:res_squid}
\end{figure}

  \begin{remark}\label{rem:squid}
  We mention in passing that the squid graph in Figure~\ref{fig:squid}  has an unusual 
  property: it possesses nodes that share    
  exactly the same eigenvector centrality (so that they could be described 
  as  spectrally iso-central) 
  whilst being topologically distinct.
   More precisely, let  $\lambda$ and  $\mathbf{v}$ denote the
    Perron-Frobenius eigenvalue and eigenvector associated with the adjacency matrix
    of this graph, respectively.
    Then straightforward algebra confirms that 
       $\lambda = (1+\sqrt{17})/2$
        and,
       after normalizing so that $v_1 = 1$,
       \[
v_1 = v_6 = v_8 = v_{10} = 1
\]
are the largest components,
 \[
v_7 = v_9 = v_{11} = \frac{\lambda-1}{2} 
\]  
are the next-largest components,
and 
\[
v_2 = v_3 = v_4 = v_5 = \frac{\lambda-1}{4} 
\]
are the 
smallest components.
    \end{remark}
     
 Next, we study the star graph 
  with $n = m+1$ nodes, $S_{1,m}$.
  We label the nodes so that node $1$ is the hub, that is, 
   the only node of degree $m$.
  The remaining nodes, which have degree one, are connected only to the hub.
  All edges are undirected.
   We note that 
  the star graph is a widely used test case for centrality measures \cite{BV14,Free78}, and we also point out that 
    the eigenvector version of full nonbacktracking centrality, 
    \cite{MZN14}, 
    breaks down on a star graph \cite{GHN18}. 
    The following theorem characterizes 
     the matrices 
     $q_k(A)$ that arise.
     
 \begin{theorem}\label{thm:star}
Let $A$ be adjacency matrix of the star graph with $n = m+1$ nodes, $S_{1,m}$. 
Then $q_0(A) = I$, $q_1(A) = A$, $q_2(A) = A^2 -(1-\theta)D$ and 
generally 
\begin{itemize}
\item[(i)] for all $k = 0,1,2,\ldots$
\[
q_{2k+1}(A) = \eta^k A
\]
where $\eta = \theta(\theta + m -1)$; and 
\item[(ii)] for all $k=2,3,\ldots$ and for $\theta\neq 0$ 
\[
q_{2k}(A) = \eta^{k-1}A^2\left(I - \left(\frac{1-\theta}{\eta}B\right)^k\right)\left(I - \frac{1-\theta}{\eta}B\right)^{-1} - (1-\theta)^kDB^{k-1},
\]
where $B = (1-\theta)I - D$.
\end{itemize}
\end{theorem}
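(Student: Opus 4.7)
The plan is to exploit the undirected nature of $S_{1,m}$: since $A = A^T$ we have $S = A$, so the fourth term in (\ref{eq:qrec}) vanishes, leaving the three-term recurrence
\[
q_{k+1}(A) = q_k(A)\, A + \mu\, q_{k-1}(A)\, B, \qquad k \ge 2,
\]
where $B = \mu I - D$. Before starting the induction I would record the relevant structural facts: labelling the hub $1$ and writing $\mathbf{1}_L := \sum_{i \ge 2} e_i$, one has $A^2 = m\, e_1 e_1^T + \mathbf{1}_L \mathbf{1}_L^T$, so $D = \mathrm{diag}(m,1,\ldots,1)$ and $B = \mathrm{diag}(\mu-m,-\theta,\ldots,-\theta)$; consequently $A^3 = mA$ and $DA + AD = (m+1)A$.

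I would prove (i) and (ii) by simultaneous induction on $k$. Set $\gamma := (1-\theta)/\eta = \mu/\eta$, so that $\eta \gamma = \mu$. The base case $q_3 = \eta A$ follows from a direct computation in the recurrence, using $A^3 = mA$ and $DA + AD = (m+1)A$ to obtain
\[
q_3 = A^3 - \mu(DA + AD) + \mu^2 A = [m - \mu(m+1) + \mu^2] A = \theta(\theta + m - 1) A = \eta A,
\]
and the $k = 2$ instance of (ii) then follows by substituting $q_3 = \eta A$ into $q_4 = q_3 A + \mu q_2 B$ and noting that $(I - (\gamma B)^2)(I - \gamma B)^{-1} = I + \gamma B$. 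For the inductive step, the even case is the cleaner one: assuming (i) at level $k$, one has $q_{2k+2} = \eta^k A^2 + \mu\, q_{2k} B$, and substituting (ii) at level $k$ reduces the identity to the telescoping relation $(I - \gamma B) + \gamma B \bigl( I - (\gamma B)^k \bigr) = I - (\gamma B)^{k+1}$, which is immediate upon expanding.

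The odd case $q_{2k+1} = q_{2k} A + \mu\, \eta^{k-1} AB$ is the main obstacle: the factor $(I - (\gamma B)^k)(I - \gamma B)^{-1}$ inside $q_{2k}$ does not commute with $A$, so a direct manipulation is unwieldy. The strategy is to split $\R^n = V \oplus W$, where $V := \mathrm{span}\{e_1, \mathbf{1}_L\}$ is the $2$-dimensional equitable-partition subspace and $W$ is its orthogonal complement within the leaf coordinates. Each of $A$, $D$, $B$ preserves this decomposition, and on $W$ we have $A|_W = 0$, $D|_W = I$, $B|_W = -\theta I$, so every term becomes scalar and both (i) and (ii) reduce to the short induction $q_{2k}|_W = (-1)^k \mu^k \theta^{k-1} I$, $q_{2k+1}|_W = 0$. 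On $V$, the basis $(e_1, \mathbf{1}_L)$ represents $A$ by a $2 \times 2$ antidiagonal matrix $\tilde A$ with $\tilde A^2 = m I_2$, while $\tilde D$ and $\tilde B$ are diagonal; hence every $\tilde q_k$ is diagonal or antidiagonal depending on parity, and the induction collapses to two scalar identities verifiable via $\eta + \mu \theta = m \theta$ and $\eta + \mu(m-\mu) = m - \mu$, both immediate consequences of $\eta = \theta(m-\mu)$ and $\theta + \mu = 1$.
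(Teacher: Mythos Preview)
Your argument is correct, and for part~(ii) it is essentially the paper's: both of you use the simplified recurrence $q_{k+1}=q_kA+\mu q_{k-1}B$ (valid since $S=A$) and reduce to the telescoping identity $(I-\gamma B)+\gamma B(I-(\gamma B)^k)=I-(\gamma B)^{k+1}$, which is exactly what the paper's unrolling of the recurrence amounts to.

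The genuine difference is in part~(i). The paper does \emph{not} derive $q_{2k+1}=\eta^kA$ from the recurrence at all; it argues combinatorially, classifying length-$(2k+3)$ walks from the hub to a leaf according to whether the penultimate leaf visited coincides with the terminal one, and reading off the extra factor $\theta^2+\theta(m-1)=\eta$ directly. This is short, uses only the star's walk structure, and is valid for all $\theta$ including $\theta=0$. You instead pull~(i) out of the recurrence, which forces you to confront the non-commutativity of $A$ and $B$; the equitable-partition splitting $\R^n=V\oplus W$ handles this neatly by reducing the $V$-block to $2\times2$ matrices governed by the scalar identities $\eta+\mu\theta=m\theta$ and $\eta+\mu(m-\mu)=m-\mu$. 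Your route is purely linear-algebraic and would transfer to settings where the walk combinatorics are less transparent, but it is longer here, and strictly speaking your simultaneous induction establishes~(i) only for $\theta\neq 0$, since it leans on the formula~(ii); the $\theta=0$ case of~(i) then needs a one-line separate check (all $q_k$ vanish for $k\geq 3$) or a continuity-in-$\theta$ remark.
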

     
\begin{proof}
See Appendix~\ref{sec:star}.
\end{proof}

We then have the following expression for 
BTDW Katz centrality.

\begin{corollary} \label{cor:Katzstar}  
Consider the star graph with
$n = m+1$ nodes, $S_{1,m}$, 
and let 
$\eta = \theta(\theta + m -1)$.
On this graph, 
the  BTDW Katz centrality measure (\ref{eq:katzibtdw}) 
exists
for
$\alpha^2 \eta < 1$ and 
has the form 
\[\left(\sum_{k=0}^\infty \alpha^kq_k(A)\bone\right)_i = 
1+ \frac{\alpha}{1-\alpha^2\eta}
\left.\begin{cases}
\left(
m(1+\alpha\theta)\right)  & \text{ for }i=1 \\
\left( 1+\alpha(\theta+m-1) \right) & \text{ for } i=2,\ldots,m+1.
\end{cases}\right.
\]
  \end{corollary}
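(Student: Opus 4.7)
The natural strategy is to specialize Corollary~\ref{cor:qKatz} to the star graph and exploit its symmetry. Since $S_{1,m}$ is undirected, every edge is reciprocated and hence $s_{ij}=a_{ij}a_{ji}=a_{ij}$, so $S=A$ and the cubic term $\mu^2\alpha^3(A-S)$ in the linear system (\ref{eq:katzbtdw}) vanishes identically. The vector $\mathbf{x}=\sum_{k=0}^{\infty}\alpha^k q_k(A)\bone$ therefore satisfies
\[
\bigl[(1-\mu^2\alpha^2)I-\alpha A+\mu\alpha^2 D\bigr]\mathbf{x}=(1-\mu^2\alpha^2)\bone
\]
whenever the coefficient matrix is invertible.

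I would then use the automorphism group of $S_{1,m}$: since any permutation of the leaves is a graph automorphism, the solution must be invariant under leaf permutations, so $\mathbf{x}=x_1 e_1 + x_L(\bone-e_1)$ for scalars $x_1$ (hub value) and $x_L$ (common leaf value). Plugging this ansatz in and using $A e_1=\bone-e_1$, $A(\bone-e_1)=m e_1$, $D e_1=m e_1$ and $D(\bone-e_1)=\bone-e_1$, the row corresponding to node $1$ and any row corresponding to a leaf produce a reduced $2\times 2$ system
\[
\begin{pmatrix} (1-\mu^2\alpha^2)+\mu\alpha^2 m & -\alpha m \\ -\alpha & (1-\mu^2\alpha^2)+\mu\alpha^2 \end{pmatrix}\begin{pmatrix} x_1 \\ x_L \end{pmatrix}=(1-\mu^2\alpha^2)\begin{pmatrix} 1 \\ 1 \end{pmatrix}.
\]

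The next step is to compute the determinant of this $2\times 2$ matrix. Expanding and collecting terms in $\alpha$ while repeatedly using $\mu=1-\theta$, one should be able to factor it as
\[
(1-\mu^2\alpha^2)(1-\alpha^2\eta),\qquad \eta=\theta(\theta+m-1).
\]
The factor $1-\mu^2\alpha^2$ then cancels against the same factor on the right-hand side of the reduced system, so solvability reduces to the stated condition $\alpha^2\eta<1$. Applying Cramer's rule and simplifying using the identity $m\theta-\eta=\theta(1-\theta)=\mu\theta$ recovers
\[
x_1=1+\frac{\alpha m(1+\alpha\theta)}{1-\alpha^2\eta},\qquad x_L=1+\frac{\alpha\bigl(1+\alpha(\theta+m-1)\bigr)}{1-\alpha^2\eta},
\]
which is the claim of the corollary.

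The main obstacle, and really the only non-routine part, is verifying that the determinant factors cleanly as $(1-\mu^2\alpha^2)(1-\alpha^2\eta)$ — this requires the precise cancellations between $\mu^2=(1-\theta)^2$, $\mu m$ and $\eta=\theta(\theta+m-1)$ that make the formula collapse. Once that factorization is in hand, the identification of the Cramer's-rule numerators with the right numerators $1-\alpha^2\eta+\alpha m(1+\alpha\theta)$ and $1-\alpha^2\eta+\alpha(1+\alpha(\theta+m-1))$ is a short algebraic check using the same identities.
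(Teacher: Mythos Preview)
Your approach is correct and genuinely different from the paper's. The paper proves the corollary by inserting the explicit formulae for $q_k(A)$ from Theorem~\ref{thm:star} into the series (\ref{eq:katzibtdw}) and summing directly (treating $\theta=0$ separately). You instead specialize the linear system of Corollary~\ref{cor:qKatz} to $S_{1,m}$, use the automorphism symmetry to collapse it to a $2\times 2$ system, and solve by Cramer's rule. Your route is cleaner in that it never touches the rather heavy even-index expression in Theorem~\ref{thm:star}(ii); the factorisation of the $2\times 2$ determinant as $(1-\mu^2\alpha^2)(1-\alpha^2\eta)$ and the subsequent numerator identifications are straightforward once one writes $\mu^2-\mu(m+1)+m=\theta(\theta+m-1)=\eta$, exactly as you indicate.

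One point deserves care. Corollary~\ref{cor:qKatz} presupposes that $\alpha$ lies within the radius of convergence of $\Psi(A)$; it says that \emph{if} the series converges then $\mathbf{x}$ solves the linear system. Invertibility of the $2\times 2$ coefficient matrix for $\alpha^2\eta<1$ therefore does not by itself prove that the series (\ref{eq:katzibtdw}) converges on that range, which is part of what the corollary asserts. The paper's direct-summation proof does yield convergence as a by-product (at least for $\theta\in(0,1]$), whereas your argument gives the formula only on the a~priori unknown convergence region and then identifies that region with $\alpha^2\eta<1$ by a solvability heuristic rather than a proof. In fairness, the paper itself is brief here and defers a full treatment of the radius of convergence to forthcoming work (see Remark~\ref{rem:specZ}), so this gap is minor; but you should flag that your linear-system route establishes the closed form, not the existence claim.
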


\begin{proof}
See Appendix~\ref{sec:star}.
\end{proof}

It can in fact be formally proved that the radius of convergence of the generating function of BTDW Katz centrality is indeed $\eta^{-1/2}$; this will be done in a manuscript in preparation.  It follows from Corollary~\ref{cor:Katzstar} that for large $n$ 
\[
\frac{x_1}{x_2}
\approx 
\frac{1}{\alpha}+ \theta.
\]
So, for a fixed $\alpha$, the extent to which the hub node is prioritized over a leaf node decreases as we penalize backtracking. 
Also of interest is the regime of large $n$ and fixed $\theta$, where 
the range of allowable $\alpha$ values in 
Corollary~\ref{cor:Katzstar} 
is $(0,1/\sqrt{\theta^2+\theta m - \theta})$, i.e., approximately
$(0,1/\sqrt{\theta n})$; this shows how the
$\alpha$ range may increase substantially  as we downweight the backtracking.

At the other extreme, we may consider an undirected  $d$-regular graph where $d$ is large. (Here, 
$\mathbf{x} \propto \mathbf{1}$
solves  \eqref{eq:katzbtdw} for all values of $\theta  \in [0,1]$,
so all nodes are ranked equally, but it is informative to 
study the singularity of the coefficient matrix.)
For a 
$d$-regular graph any walk of length $k$ may be extended to a walk of length $k+1$
 using $d-1$ nonbacktracking edges and only one backtracking edge.
 So we would expect the allowable range of $\alpha$ values to increase
 much less dramatically as we decrease $\theta$. Indeed, 
since $\rho(A) = d$,
as $\alpha$ increases
from zero
in 
 \eqref{eq:katzbtdw}
 it may be shown that 
 the system becomes
 singular when 
 $ \alpha = (d-\mu)^{-1}=(d-1+\theta)^{-1}$.
 Hence, in this case, the use of $\theta$ makes very little difference to  
  the range of allowable $\alpha$ values.

In future work, 
it would be of interest to study how 
the choice of 
$\theta$ impacts the 
allowable range of 
$\alpha$ values
for other types of graph and also for 
networks arising in practice.
  For the 
 transport network
 in section~\ref{sec:compex}, using the 
 spectral bound 
 in Theorem~\ref{thm:spec},
 we found computationally that
 the upper limit,
 $\alpha^\star$, varied approximately linearly between 
 $\alpha^\star = 0.264$ 
  at $\theta = 1$
  and  
   $\alpha^\star = 0.415$ 
  at $\theta = 0$.

  \section{Spectral Limit}\label{sec:spec}
In this section we briefly relate the Katz-style centrality measure to  
 an eigenvalue version.
 We begin by noting that 
      the recurrences (\ref{eq:qrec}) and 
      (\ref{eq:qrecb}) are closely connected with the block matrix
      $Z \in \RR^{3 n \times 3 n}$ of the form 
\begin{equation}
     Z = 
      \left[
       \begin{array}{ccc}
        0 & I & 0 \\
        0 & 0 & I \\
          -\mu^2 \left(A-S\right) & 
             \mu \left( \mu I - D \right)
             & 
             A
         \end{array}
         \right],
          \label{eq:Zdef}
         \end{equation}
         as made clear by the following theorem.
        
         \begin{theorem}\label{thm:rad}
          The power series defining the generating function $\Psi(A)$
           in (\ref{eq:psi}) is convergent when
           $
              \alpha < 1/\rho(Z)
            $.
\end{theorem}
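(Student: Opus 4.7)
The plan is to recognize that the recurrence (\ref{eq:qrec}) (or equivalently (\ref{eq:qrecb})) is a three-term linear recurrence in the block-matrix sequence $q_k(A)$, and the matrix $Z$ in (\ref{eq:Zdef}) is precisely its block companion matrix. From there, a Gelfand-style spectral radius bound immediately gives the convergence claim via the Cauchy--Hadamard theorem.

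Concretely, first I would stack three consecutive terms into the $3n\times n$ block column
\[
Q_k := \begin{bmatrix} q_{k-2}(A) \\ q_{k-1}(A) \\ q_k(A) \end{bmatrix},\qquad k\ge 2,
\]
and verify by direct block multiplication that the form (\ref{eq:qrecb}) of the recurrence amounts to
\[
Q_{k+1} = Z\,Q_k,\qquad k\ge 2.
\]
The first two block rows give the trivial shifts $q_{k-1}$ and $q_k$, while the third row reproduces exactly the right-hand side of (\ref{eq:qrecb}). Iterating yields the closed form $Q_k = Z^{k-2} Q_2$ for all $k\ge 2$, with $Q_2 = [\,I;\,A;\,A^2-\mu D\,]$ from Theorem~\ref{thm:qres}.

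Next I would extract $q_k(A)$ as the bottom block of $Q_k$, so that for any submultiplicative matrix norm
\[
\norm{q_k(A)} \le \norm{Z^{k-2}}\,\norm{Q_2},\qquad k\ge 2.
\]
Then by Gelfand's formula applied to $Z$, for any $\varepsilon>0$ there exists $C_\varepsilon>0$ with $\norm{Z^{k}}\le C_\varepsilon(\rho(Z)+\varepsilon)^k$ for all $k\ge 0$. Substituting gives $\limsup_{k\to\infty}\norm{q_k(A)}^{1/k}\le \rho(Z)$, and the Cauchy--Hadamard theorem, applied entrywise (or in any matrix norm, since all are equivalent in finite dimensions), concludes that the power series $\Psi(A)=\sum_{k\ge 0}\alpha^k q_k(A)$ converges for every $\alpha$ with $|\alpha|<1/\rho(Z)$.

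There is no real obstacle here beyond bookkeeping: the one care-point is matching the orientation of the recurrence to the block companion $Z$ as written in (\ref{eq:Zdef}). A quick block multiplication shows that $Z$ encodes the left-multiplicative form (\ref{eq:qrecb}), which is why I would invoke the corollary rather than (\ref{eq:qrec}) itself; had one preferred to work with (\ref{eq:qrec}), one would simply observe that $\rho(Z^\top)=\rho(Z)$ and apply the same argument to block row vectors.
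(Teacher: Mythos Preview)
Your proof is correct and follows essentially the same route as the paper: both recognize $Z$ as the block companion of the recurrence (\ref{eq:qrecb}), stack three consecutive $q_k(A)$ into a $3n\times n$ block vector so that one iteration amounts to left-multiplication by $Z$, and then invoke Gelfand's formula to control growth. The only cosmetic differences are that the paper absorbs the factor $\alpha^k$ into the stacked vector and bounds the tail of the series directly, whereas you leave $\alpha$ out and finish via Cauchy--Hadamard; your remark about matching the orientation of $Z$ to (\ref{eq:qrecb}) rather than (\ref{eq:qrec}) is exactly right.
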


\begin{proof}
Suppose $\alpha \rho(Z) < 1$. 
Let 
$v_k \in \RR^{3n \times n}$ be defined by 
\[
v_k
=
\alpha^k
\left[
 \begin{array}{r}
   q_{k-1}(A) \\
   q_k(A) \\
     q_{k+1}(A)
    \end{array}
    \right],
\]
for $k \geq 1$. Then we see from (\ref{eq:qrecb}) and (\ref{eq:Zdef}) that 
\[
 v_{k+1} = \alpha Z v_{k} = (\alpha Z)^k v_1 .
 \]
By Gelfand's formula \cite[Corollary 5.6.14]{HJ} it follows that, for any matrix norm $\| \cdot \|$ and for any $\epsilon > 0$, there exists $ k_0 \in \mathbb{N}$ such that if $k \geq k_0$ then $\| (\alpha Z)^k \| < \left(\rho(\alpha Z) + \epsilon\right)^k.$
Taking $\epsilon = (1-\rho(\alpha Z))/2$ and specializing to any submultiplicative matrix norm, we conclude that there exists $k_0$ such that
\[  k \geq k_0 \geq 1 \Rightarrow \alpha^k \|q_k(A)\| \leq \|v_{k+1}    \|  \leq \left( \frac{1+\rho(\alpha Z)}{2} \right)^k    \| v_1 \| .\]
The result follows because
\[ \left\| \sum_{k=0}^\infty \alpha^k q_k(A) \right\| \leq \sum_{k=0}^{k_0-1} \alpha^k \| q_k(A) \| + \| v_1 \| \sum_{k=k_0}^\infty \left(  \frac{1+\rho(\alpha Z)}{2} \right)^k  ,    \]
and since $\rho(\alpha Z) < 1$ the right hand side is convergent.
\end{proof}    
     
   \begin{remark}\label{rem:specZ}
   We note that 
 although 
   the bound $1/\rho(Z)$ 
in Theorem~\ref{thm:spec}    
   will generally be sharp, there exist 
   cases where this is not so.
   For example, in the star graph example of 
   Corollary~\ref{cor:Katzstar}, it may be shown that 
    the radius of convergence 
     is always 
     $ 1/ \sqrt{\eta}$, but this 
     quantity coincides with 
    $1/\rho(Z)$  if and only if 
   $\theta \ge 1/(m+1)$.
   This statement, along with further results that may be derived using
    matrix polynomial theory, will be proved in forthcoming work.  
   \end{remark}
   
   The next theorem characterizes the node ranking that arises 
    generically when the 
   radius of convergence is approached.

   \begin{theorem}\label{thm:spec}
   Suppose that the radius of convergence is $1/\rho(Z)$,
   and hence the bound in Theorem~\ref{thm:rad} is sharp.
    Also, 
   for fixed $\theta$,
  suppose that
  $ I - \rho(Z)^{-1} Z$ has rank $n-1$.    
  Then, as
 $ \alpha \to 1/\rho(Z)$  
 from below
the ranking produced by 
(\ref{eq:katzbtdw}) 
 converges to that given by the last n components of the unique
eigenvector of $Z$ associated with the eigenvalue $\rho(Z)$.
\end{theorem}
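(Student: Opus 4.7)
The plan is to transfer the problem from the node-level linear system (\ref{eq:katzbtdw}) to the block companion matrix $Z$ by Laurent-expanding the resolvent $M(\alpha)^{-1}$ about its singularity at $\alpha^{\ast} := 1/\rho(Z)$. Throughout let $M(\alpha) := I - \alpha A - \mu \alpha^2 (\mu I - D) + \mu^2 \alpha^3 (A-S)$, the coefficient matrix of (\ref{eq:katzbtdw}), and introduce the matrix polynomial $P(\lambda) := \lambda^3 I - \lambda^2 A - \lambda \mu(\mu I - D) + \mu^2 (A-S)$. A direct expansion reveals that $Z$ is the standard block companion of $P$, so $\det(\lambda I - Z) = \det P(\lambda)$, and furthermore $M(\alpha) = \alpha^3 P(\alpha^{-1})$. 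Hence $M(\alpha)$ is singular precisely when $\alpha^{-1} \in \sigma(Z)$; in particular, $M(\alpha^\ast)$ is singular. Writing the Perron eigenvector of $Z$ as $\phi = [\phi_1;\phi_2;\phi_3]$, the first two block rows of $Z\phi = \rho(Z)\phi$ yield $\phi_2 = \rho(Z)\phi_1$ and $\phi_3 = \rho(Z)^2 \phi_1$, while the third reduces to $P(\rho(Z))\phi_1 = 0$, so $\phi_1$ spans $\ker M(\alpha^\ast)$.

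Next, I would interpret the rank hypothesis as ensuring that $\rho(Z)$ is a simple eigenvalue of $Z$ (so that $\phi$ is unique up to scaling). Simplicity is equivalent to $\det M(\alpha)$ having a simple zero at $\alpha^\ast$, and standard matrix-function theory then produces the Laurent expansion
\[
M(\alpha)^{-1} \;=\; \frac{\phi_1 \psi_1^{T}}{c\,(\alpha - \alpha^\ast)} \;+\; H(\alpha),
\]
where $\psi_1$ spans the one-dimensional left null space of $M(\alpha^\ast)$, $H$ is analytic at $\alpha^\ast$, and $c := \psi_1^{T} M'(\alpha^\ast) \phi_1 \ne 0$.

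Substituting into $\mathbf{x}(\alpha) = (1 - \mu^2 \alpha^2)\,M(\alpha)^{-1} \mathbf{1}$ yields
\[
\mathbf{x}(\alpha) \;=\; \frac{(1 - \mu^2 (\alpha^\ast)^2)(\psi_1^{T} \mathbf{1})}{c\,(\alpha - \alpha^\ast)}\,\phi_1 \;+\; O(1), \qquad \alpha \nearrow \alpha^\ast.
\]
Assuming the scalar numerator is nonzero, $\mathbf{x}(\alpha)$ diverges along the direction $\phi_1$, so after any positive normalisation $\mathbf{x}(\alpha)/\|\mathbf{x}(\alpha)\|$ converges (up to an overall sign fixed by the Perron choice $\phi \ge 0$) to $\phi_1/\|\phi_1\|$. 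Because $\phi_3 = \rho(Z)^2 \phi_1$ is a strictly positive scalar multiple of $\phi_1$, the ranking induced by the last $n$ components of $\phi$ coincides with that induced by $\phi_1$, giving the claim.

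The main obstacle lies in verifying the two nonvanishing conditions that underpin the divergence: (i) $\psi_1^{T}\mathbf{1} \ne 0$, and (ii) $1 - \mu^2(\alpha^\ast)^2 \ne 0$. Condition (i) is the delicate point; one expects it to hold because, under mild irreducibility of $Z$, a block-structured Perron--Frobenius argument should force $\psi_1$ to have constant sign and therefore a nonzero inner product with $\mathbf{1}$. Condition (ii), which amounts to $\alpha^\ast \ne 1/\mu$, is a transversality condition on the spectrum of $Z$ that fails only on a measure-zero locus of parameters. These are the subtleties hidden behind the word ``generically'' in the statement, and they would be the main items to pin down in a fully rigorous write-up.
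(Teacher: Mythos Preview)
The paper gives no self-contained argument here; it simply states that the proof of \cite[Theorem~6.1]{AGHN17a} extends directly. Your approach---identify $M(\alpha)=\alpha^{3}P(\alpha^{-1})$ with $Z$ the companion linearisation of $P$, Laurent-expand $M(\alpha)^{-1}$ about the first singularity $\alpha^{\ast}=1/\rho(Z)$, and read off the null direction---is exactly the standard mechanism behind that referenced result, so you are effectively reconstructing the proof the paper defers to. Your observation that $\phi_{3}=\rho(Z)^{2}\phi_{1}$, so that the ``last $n$ components'' and the kernel vector of $M(\alpha^{\ast})$ induce the same ranking, is the key structural fact and you have it right.

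Two small points. First, the rank hypothesis (which, since $Z\in\RR^{3n\times 3n}$, should be read as rank $3n-1$) gives only that the \emph{geometric} multiplicity of $\rho(Z)$ is one; it does not by itself force algebraic simplicity, so your sentence ``Simplicity is equivalent to $\det M(\alpha)$ having a simple zero'' slightly overstates what is assumed. This does not damage the conclusion: if there is a nontrivial Jordan block the pole of $M(\alpha)^{-1}$ has higher order, but the leading Laurent coefficient is still a rank-one matrix whose column space is spanned by $\phi_{1}$, so the normalised limit is unchanged. Second, you are right to flag $\psi_{1}^{T}\mathbf{1}\neq 0$ and $\mu\alpha^{\ast}\neq 1$ as the hidden genericity assumptions; the paper's phrasing ``generically'' (in the text preceding the theorem) is precisely where these live, and neither the paper nor its reference makes them fully explicit.
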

   \begin{proof}
   The proof of 
   \cite[Theorem~6.1]{AGHN17a}
  can be extended directly to this case. 
\end{proof}

   \begin{remark}
     Theorem~\ref{thm:spec} shows that the last $n$ components of the 
     dominant eigenvector of $Z$ in (\ref{eq:Zdef}) 
    is an appropriate backtrack-downweighted eigenvalue centrality measure.
    Indeed, for  $\theta = 0$ it reduces to the full backtracking 
     measure given in \cite{AGHN17a} for general graphs and in 
      \cite{MZN14} for undirected graphs.
      For general $\theta$, in the undirected case it reduces to the measure in  
       \cite[Theorem~3.8]{CFGGP19}. 
       \end{remark}

      \section{Exponential and Other Generating Functions}\label{sec:expo}
Katz centrality in (\ref{eq:katzi}) and (\ref{eq:katz}) is associated with the matrix 
resolvent function $(I-\alpha A)^{-1}$. 
Several authors have argued that other matrix functions, defined via different power series,  may also be useful; see, for example,
\cite{BK13,BK15,EHSiamRev}.
Hence, in this section,
given coefficients $\{c_k\}_{k=0}^{\infty}$,
           where $c_k$ is the downweighting factor associated with a walk of 
            length $k$, we are interested in characterizing 
         and computing the corresponding quantity
          $
             \sum_{k=0}^{\infty}  c_k q_k(A) 
             $,
              and the action of this matrix on $\mathbf{1}$.

          We define
          \begin{equation}
             f_0(y) = \sum_{k=0}^{\infty}  c_k y^k, 
             \label{eq:fzero}
             \end{equation}
             and, more generally, for any integer $s\ge 0$, 
             \begin{equation}
             f_s(y) = \sum_{k=0}^{\infty}  c_{s+k} y^k.
             \label{eq:fs}
             \end{equation}
              The following theorem shows how 
                $
             \sum_{k=0}^{\infty}  c_k q_k(A) 
             $ can be expressed in terms of $f_0(Z)$ and $f_2(Z)$. 
             Consequently
            the backtrack-downweighted version of any matrix-function
             based  centrality measure can be computed whenever the 
              underlying matrix function is computable.  
               We note that  
              the series defining $f_0(Z)$ converges whenever the series defining $f_2(Z)$ converges, and vice versa.

         \begin{theorem}\label{thm:genfun}
         When the series represented below converge, we have
         \begin{equation}
          \sum_{k=0}^{\infty}  c_k q_k(A) = 
            \left[ 
              \begin{array}{ccc}
                0 & 0 &  I 
                \end{array}
                \right]
                f_0(Z)
                  \left[ 
              \begin{array}{c}
                I \\
                 A \\
                 A^2 - \mu D 
                \end{array}
                \right].
                \label{eq:fZgen}
                \end{equation}
                Moreover, this quantity may be computed as the $(3,3)$ block 
                of $f_0(Z) - \mu^2 f_2(Z)$.
                \end{theorem}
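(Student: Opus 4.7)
The plan is to exploit the block companion structure of $Z$, which encodes the three-term recurrence (\ref{eq:qrecb}). A routine block computation first shows that
\[
Z \begin{bmatrix} q_{k-2}(A) \\ q_{k-1}(A) \\ q_k(A) \end{bmatrix} = \begin{bmatrix} q_{k-1}(A) \\ q_k(A) \\ q_{k+1}(A) \end{bmatrix} \qquad (k \geq 2),
\]
the first two block rows simply shifting entries and the third reproducing $q_{k+1}(A)$ via (\ref{eq:qrecb}). Iterating from the initial data $q_0 = I$, $q_1 = A$, $q_2 = A^2 - \mu D$ given in Theorem~\ref{thm:qres} yields $Z^{j}\,[I, A, A^2 - \mu D]^{\top} = [q_j(A), q_{j+1}(A), q_{j+2}(A)]^{\top}$ for every $j \geq 0$. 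Identity (\ref{eq:fZgen}) then follows by multiplying by $c_j$, summing over $j$, and applying the appropriate block-row selection that extracts $q_j(A)$; convergence under the assumed hypothesis is inherited from the Gelfand-type argument already used in the proof of Theorem~\ref{thm:rad}.

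For the ``moreover'' clause, I would prove by induction on $k$ the identity
\[
[Z^k]_{33} = \sum_{j=0}^{\lfloor k/2 \rfloor} \mu^{2j}\, q_{k-2j}(A),
\]
for the $(3,3)$ block of $Z^k$. The bases $k=0, 1$ are immediate, and $k=2$ gives $A^2 - \mu D + \mu^2 I = q_2(A) + \mu^2 q_0(A)$, planting the tail. In the inductive step, I would expand $Z^{k+1} = Z \cdot Z^k$ using the third row of $Z$ on the third column of $Z^k$: for each $j$ strictly below the cut-off, three consecutive terms in the inductive hypothesis collapse via (\ref{eq:qrecb}) into $\mu^{2j}\, q_{k+1-2j}(A)$, while a single boundary contribution, whose exact form depends on the parity of $k$ and which is supplied by the initial values of Theorem~\ref{thm:qres}, produces the new top-order term in $\mu^{2j}$.

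Summing this identity against $\{c_k\}$ and switching the order of summation yields
\[
[f_0(Z)]_{33} = \sum_{j=0}^{\infty} \mu^{2j}\, F_{2j}, \qquad [f_2(Z)]_{33} = \sum_{j=0}^{\infty} \mu^{2j}\, F_{2j+2},
\]
where $F_m := \sum_{k \geq 0} c_{k+m}\, q_k(A)$. The combination $[f_0(Z)]_{33} - \mu^2\, [f_2(Z)]_{33}$ then telescopes all of the $j \geq 1$ contributions, leaving precisely $F_0 = \sum_k c_k q_k(A)$, as claimed.

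The main obstacle is the inductive step for $[Z^k]_{33}$: the interior terms are uniform and are handled cleanly by (\ref{eq:qrecb}), but the single boundary contribution (arising because the recurrence is only valid for $k \geq 2$ and must otherwise be replaced by the definitions of $q_0, q_1, q_2$) requires careful accounting, with the exact form differing by the parity of $k$. Once the tail formula is in place, the telescoping in the final step is elementary, and identity (\ref{eq:fZgen}) follows directly from the companion identity established in the first paragraph.
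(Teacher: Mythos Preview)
Your argument for (\ref{eq:fZgen}) is identical in spirit to the paper's: both establish the companion identity $Z^{j}[I;A;A^{2}-\mu D]^{T}=[q_{j};q_{j+1};q_{j+2}]^{T}$ from (\ref{eq:qrecb}) and then sum termwise.

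For the ``moreover'' clause your approach is correct but genuinely different. You prove by induction the closed form $[Z^{k}]_{33}=\sum_{j=0}^{\lfloor k/2\rfloor}\mu^{2j}q_{k-2j}(A)$; this works because the shift structure of $Z$ gives $[Z^{k}]_{13}=[Z^{k-2}]_{33}$ and $[Z^{k}]_{23}=[Z^{k-1}]_{33}$, so that applying the third block row of $Z$ invokes three consecutive instances of the hypothesis, with (\ref{eq:qrecb}) collapsing the interior terms and the initial data $q_{0},q_{1},q_{2}$ supplying the boundary contribution (with the parity split you correctly anticipate). Summing against $c_{k}$, swapping, and telescoping then finishes as you describe. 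The paper, by contrast, never computes $[Z^{k}]_{33}$ explicitly: it introduces the auxiliary matrix $\widetilde Z=Z^{2}-\left[\begin{smallmatrix}0&0&0\\0&0&0\\0&0&\mu^{2}I\end{smallmatrix}\right]$, verifies the single identity $\widetilde Z\,[0;0;I]=[I;A;A^{2}-\mu D]$, and splits $\sum_{k\ge 2}c_{k}Z^{k-2}\widetilde Z$ directly into the $f_{0}$ and $f_{2}$ pieces. The paper's route is shorter and sidesteps the parity bookkeeping; yours costs that case analysis but yields, as a byproduct, an explicit formula for the $(3,3)$ block of every power of $Z$.
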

          
          \begin{proof}
          It follows directly from the recurrence (\ref{eq:qrecb}) and the definition 
           of $Z$ in 
           (\ref{eq:Zdef}) 
that, for all $k \ge 0$,           
\begin{equation}
Z^{k} 
  \left[
       \begin{array}{c}
         I  \\
          A \\
          A^2 - \mu D
         \end{array}
         \right]
   =         
    \left[
 \begin{array}{c}
   q_{k}(A) \\
   q_{k+1}(A) \\
   q_{k+2}(A)
    \end{array}
    \right].
    \label{eq:Zk}
\end{equation}
 The identity (\ref{eq:fZgen}) is then immediate.

          We next 
          define the block matrix $\widetilde Z \in \RR^{3 n \times 3n}$ by
\[
     \widetilde Z = 
     Z^2 - 
      \left[
       \begin{array}{ccl}
        0 & 0 & 0 \\
        0 & 0 & 0\\
        0 & 0 & 
          \mu^2 I 
         \end{array}
         \right], 
         \]        
          and note that 
      \begin{equation}
           \widetilde Z
              \left[ 
              \begin{array}{c}
                0 \\
                 0 \\
                 I
                \end{array}
                \right]
                 =
                 \left[
       \begin{array}{c}
         I  \\
          A \\
          A^2 - \mu D
         \end{array}
         \right].
             \label{eq:Zhat}
\end{equation}

Using (\ref{eq:Zk}) and (\ref{eq:Zhat}), 
we may write 
   \begin{eqnarray*}
           \sum_{k=0}^{\infty}  c_k q_k(A)            
            &=& 
    c_0 I
     + 
      c_1 A
      +
       \sum_{k=2}^{\infty} c_k q_k(A)\\
       &=&
         c_0 I
     + 
      c_1 A
      +
           \left[ 
              \begin{array}{ccc}
                0 & 0 &  I
                \end{array}
                \right]
                \sum_{k=2}^{\infty} 
                  c_k Z^{k-2} 
                     \widetilde Z
                       \left[ 
              \begin{array}{c}
                0\\
                 0 \\
                 I
                \end{array}
                \right]\\
                   &=&
         c_0 I
     + 
      c_1 A
      +
           \left[ 
              \begin{array}{ccc}
                0 & 0 &  I
                \end{array}
                \right]
                \sum_{k=2}^{\infty} 
                  c_k Z^{k} 
                       \left[ 
              \begin{array}{c}
                0\\
                 0 \\
                 I
                \end{array}
                \right]\\
                &&
   \mbox{} -      
     \left[ 
              \begin{array}{ccc}
                0 & 0 &  I
                \end{array}
                \right]
                \sum_{k=2}^{\infty} 
                  c_k Z^{k-2} 
                  \left[
       \begin{array}{ccl}
        0 & 0 & 0 \\
        0 & 0 & 0\\
        0 & 0 & 
          \mu^2  I 
         \end{array}
         \right]
                    \left[ 
              \begin{array}{c}
                0\\
                 0 \\
                 I
                \end{array}
                \right]   \\  
                &=&
                   \left[ 
              \begin{array}{ccc}
                0 & 0 &  I
                \end{array}
                \right]
                 f_0(Z)
                       \left[ 
              \begin{array}{c}
                0\\
                 0 \\
                 I
                \end{array}
                \right]
                -
                 \mu^2
                  \left[ 
              \begin{array}{ccc}
                0 & 0 &  I
                \end{array}
                \right]
                 f_2(Z)
                       \left[ 
              \begin{array}{c}
                0\\
                 0 \\
                 I
                \end{array}
                \right],
                \end{eqnarray*}
                which completes the proof. 
\end{proof}

\begin{remark}\label{rem:ode}
We note that in the exponential case, where $c_k = \alpha^k/(k!)$, 
we may recover (\ref{eq:fZgen}) through a more direct route.
Regarding the power series
$\sum_{k=0}^{\infty} \alpha^k q_{k}(A)/(k!)$ 
 as a function of the parameter $\alpha$, say $F(\alpha)$,
 we have
\[
 F(\alpha) =
 \sum_{k=0}^{\infty} 
                   \frac{\alpha^{k} q_{k}(A)} { k!},
\quad
F'(\alpha) =
 \sum_{k=0}^{\infty} 
                   \frac{\alpha^{k} q_{k+1}(A)} { k!},
                   \quad 
  F''(\alpha) =
 \sum_{k=0}^{\infty} 
                   \frac{\alpha^{k} q_{k+2}(A)} { k!}. 
                  \]
                  Then multiplying by $\alpha^k$ in
                  (\ref{eq:qrecb}) and summing from
                  $k = 0$ to $\infty$ gives the matrix-valued linear third order ODE
                  \[
                   F'''(\alpha) = A F''(\alpha) +
                      \mu \left( \mu I - D \right) F'(\alpha)
                      - \mu^2 (A - S) F(\alpha).
                      \]
                  This may be written in block first order form
          \[
          \left[
              \begin{array}{c}
                F(\alpha)\\
                F'(\alpha)\\
                F''(\alpha)
                \end{array}
                \right]'
                =
                Z
                \left[
              \begin{array}{c}
                F(\alpha)\\
                F'(\alpha)\\
                F''(\alpha)
                \end{array}
                \right]
           \]
           and hence, using 
           $F(0) = q_0(A) = I$, $F'(0) = q_1(A)= A$ and 
           $F''(0) = q_2(A) = A^2 - \mu D$, we have 
           \[
              \left[
              \begin{array}{c}
                F(\alpha)\\
                F'(\alpha)\\
                F''(\alpha)
                \end{array}
                \right]
                =
                \exp(\alpha Z)
                 \left[
              \begin{array}{c}
                I\\
                A\\
                 A^2 - \mu D
                \end{array}
                \right],
                \]
                which is consistent with (\ref{eq:fZgen}).
           \end{remark}

\section{Computational Example}\label{sec:compex}

In this section we illustrate the 
 BTDW Katz centrality measure (\ref{eq:katzbtdw}) 
 on a real transport network from 
 \cite{DSGA14}  with further data supplied in 
\cite{CDT20}.
 Here, $n = 271$ nodes represent stations in the London Underground system,
 with 312 (undirected) edges denoting rail links.
 We note that such a network has many ``hanging trees'' where  underground 
 lines head away from the city centre. Hence, we may expect a fully nonbacktracking centrality measure to penalize geographically outlying stations
 too severely. On the other hand, since there are some well-connected city centre stations that intersect many underground lines, we may expect 
 traditional eigenvector centrality to display localization, where most 
 of the centrality mass is placed on a small subset of the nodes.
 
 To be concrete, we will quantify localization in terms of the 
\emph{inverse participation ratio}, $S(n)$.
Here, for a family of unit-norm
vectors of increasing dimension, 
that is, 
 $\mathbf{v} \in \RR^n$,
 with 
 $\|  \mathbf{v} \|_2 = 1$,
letting 
 \[
 S(n) = \sum_{i=1}^{n} v_i^4,
 \]
 we say the sequence 
 is localized if $S(n) = O(1)$ as $n \to \infty$
 \cite{GDOM12}.
 In the tests below, where $n$ is fixed,
 we use the size of $S(n)$ as a measure of localization
 when comparing results, with a larger 
inverse participation ratio
 indicating greater localization. 

In Figure~\ref{fig:ipr} we show how the inverse participation ratio
for the 
BTDW Katz centrality measure (\ref{eq:katzbtdw}) 
varies as a function of $\theta$ and $\alpha$.
For this network, $\rho(A) \approx 3.78$, so standard Katz, 
that is $\theta = 0$, requires
$\alpha < 1/\rho(A) := \alpha^\star \approx  0.264$.
In the figure, we use $\alpha$ values of 
$0.05$,
 $0.06$,
\ldots,
$0.26$.
 We see from the figure that 
 the measure dramatically increases in localization
 in the regime where backtracking is not suppressed ($\theta \approx 1$) 
 and we are close to an eigenvector measure ($\alpha \approx \alpha^\star$).
 For Figure~\ref{fig:kendall} we made use of independent data
 from  \cite{CDT20} that records the annual
 passenger usage at each station. We took data for the most recent year, 2017.
 The idea now is to regard passenger usage (the total number exiting or entering a station) as an indication of importance, and to check whether 
 this correlates with the centrality measure, which is computed only from 
information about the network structure.
 We used  Kendall's tau coefficient to quantify
 the correlation between passenger usage and centrality.
 (Spearman's  rank correlation coefficient, which is also widely used
for assessing rankings, gave similar results.)
We see that the most localized regime
according to Figure~\ref{fig:ipr}
is also the regime with the best correlation coefficient.
However, by varying the backtrack-downweighting parameter,
$\theta$, it is possible to achieve a reduction in 
localization.
Varying the Katz parameter, $\alpha$, instead,
gives   
 a more rapid
loss of correlation.

 \begin{figure}
  \begin{center}
\includegraphics[width=0.7\textwidth]{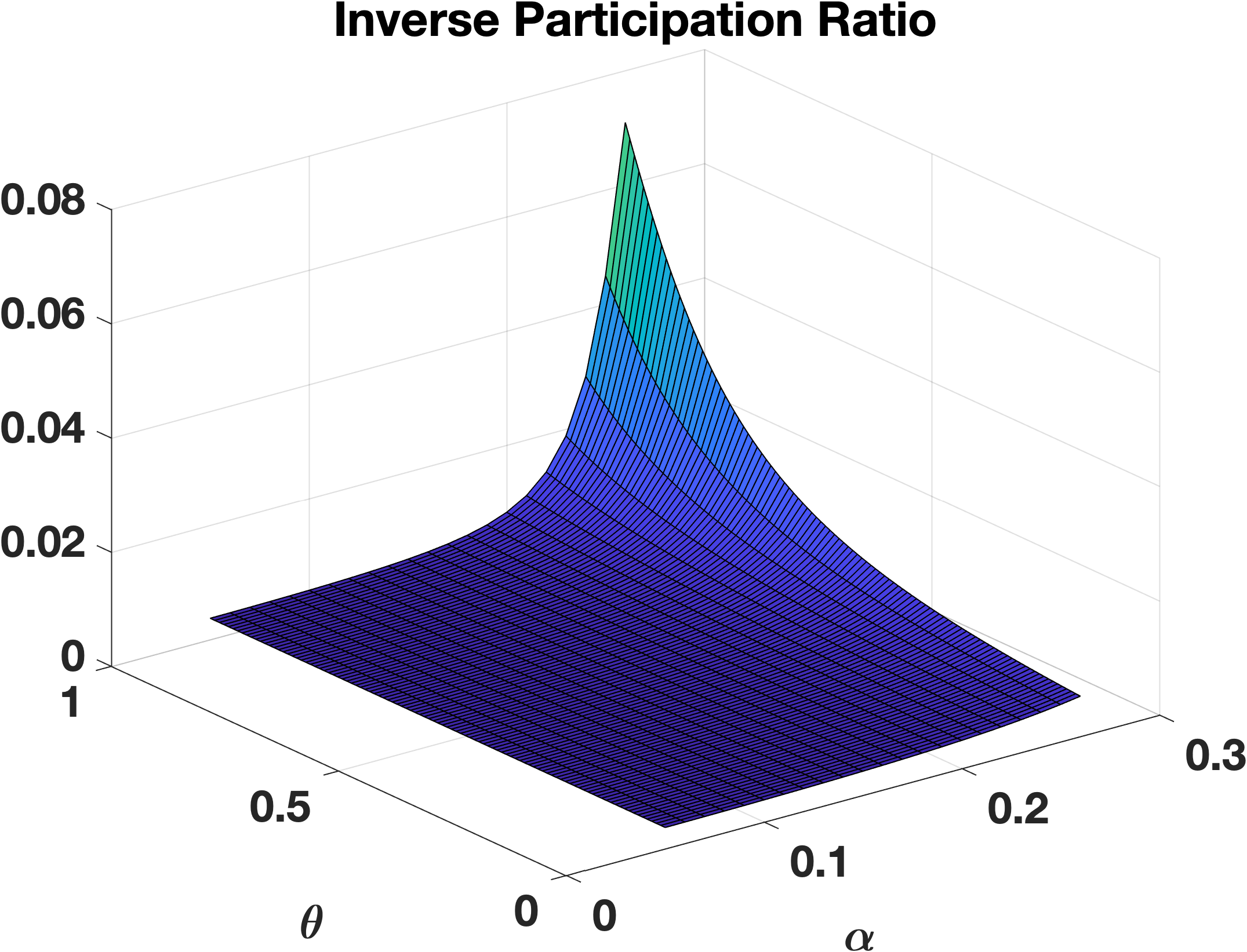}
\end{center}
\caption{
Inverse participation ratio for 
BTDW Katz centrality on the London Underground transport data.
}
\label{fig:ipr}
\end{figure}

\begin{figure}
  \begin{center}
\includegraphics[width=0.7\textwidth]{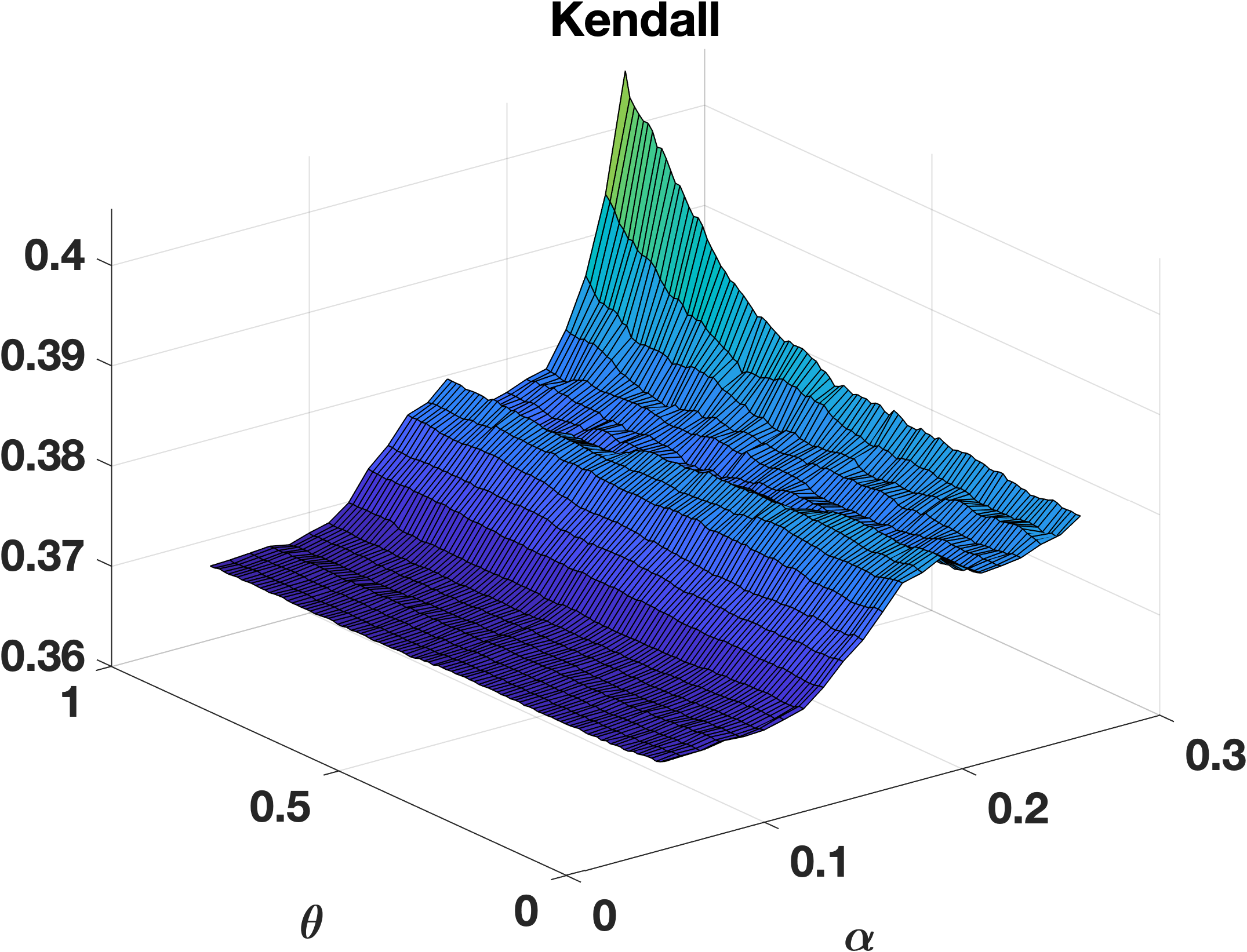}
\end{center}
\caption{
Kendall's tau coefficient for 
BTDW Katz centrality and annual passenger usage  on the 
London Underground transport data.
}
\label{fig:kendall}
\end{figure}

\section{Discussion}\label{sec:disc}
Our aim in this work was to define and study a framework that interpolates between 
traditional and nonbacktracking  
walk-counting 
 combinatorics.
 From a network science perspective, this has the potential to 
 combine 
 the benefits of both worlds---notably, 
 avoiding localization while accounting for
 tree-like structures. 
 Our results also 
 extend theoretical developments in 
 nonbacktracking walks and zeta functions on graphs from a range of 
 related fields
 \cite{AFH15,H07,HST,Sm07,ST96,WF09}.
 We developed a general
 four term recurrence in 
 Theorem~\ref{thm:qres} 
 and an expression for the 
 associated generating function in 
 Theorem~\ref{thm:genres}.
 In particular, we showed
in Corollary~\ref{cor:qKatz} that  
  the corresponding 
Katz-like centrality measure 
may be computed at the same cost as 
standard Katz.
By 
considering the relevant
limit,
Theorem~\ref{thm:spec} 
 then produced an eigenvector
centrality measure
that interpolates between 
the traditional and nonbacktracking extremes,

There is considerable scope for further work on backtrack-downweighted walks. 
In Remark~\ref{rem:specZ}
we quoted a counterintuitive result on the radius of convergence 
for the associated generating function
on a star graph. This raises questions such as how best 
to characterize the radius of convergence in general, and 
under what circumstances the lower bound in Theorem~\ref{thm:rad}
is sharp.
We are currently studying 
these issues 
with the tools of matrix polynomial theory.
From the perspective of algorithm design,
development of 
further insights 
and guidelines 
concerning the behaviour of backtrack-downweighted Katz 
in terms of the   
parameters 
$\alpha$ and $\theta$ 
is
 also of interest.

\bigskip
\bigskip


\bigskip
\bigskip

\noindent
\textbf{Acknowledgements}
The work of FA was supported by fellowship ECF-2018-453 from the Leverhulme Trust.
The work of DJH was supported by EPSRC Programme Grant EP/P020720/1.  The work of VN was supported by an Academy of Finland grant (Suomen Akatemian p\"a\"at\"os 331240).

\appendix
\section{Star Graph Results}\label{sec:star}

\noindent
\textbf{Proof of Theorem~\ref{thm:star}}

The expressions for $q_k(A)$ for $k=0,1,2$ are independent of the structure of the network; they follow immediately from Theorem~\ref{thm:qres}. 

The adjacency matrix has the form 
\[
A = \begin{bmatrix} 
0 & \bone^T  \\
\bone & \\
\end{bmatrix}.\]

\begin{itemize}
\item[(i)] Let $V = \{1,2,\ldots,m,m+1\}$ be the set of nodes in $S_{1,m}$. The graph is bipartite with node partitions $V_1 = \{1\}$ and $V_2 = \{2,\ldots,m+1\}$. It follows that all walks of odd length have to originate from a node in $V_i$ and terminate in a node in $V_j$, for $i,j=1,2$ and $i\neq j$. This immediately implies that, for all $k=0,1,\ldots$, the matrices $q_{2k+1}(A)$ have the same sparsity pattern as the matrix $A$: 
\[
q_{2k+1}(A) = \begin{bmatrix}
0 & \beta \bone^T \\
\beta\bone^T &
\end{bmatrix}
\]
for some $\beta>0$. 

We now proceed by induction. When $k=0$ we have that $q_{2k+1}(A) = A = \eta^k A$. 
Suppose now that the result holds for all $\ell\leq k$. We want to show that 
\[
q_{2k+3}(A) = \eta^{k+1}A = [\theta(\theta+m-1)]^{k+1}A.
\]
We proceed entrywise by considering the walks of length $2k+3$ from node $1$ to node $i\neq 1$; These are of two types:
\begin{itemize}
\item First type:
\[
      \underbrace{ 1 \,  \star \, \star \, \cdots  \, 1 \, i}_{(2k+1)\checkmark} \, 1 \, i,
\]
of which there are $\eta^k\theta\theta = \eta^k\theta^2$. The multiplication by $\theta^2$ is used to account for the two backtracking steps introduced when moving from node $i$ to node $i$ in two steps. 
\item Second type:
\[
      \underbrace{ 1 \,  \star \, \star \, \cdots  \, 1 \, j}_{(2k+1)\checkmark} \, 1 \, i,
\]
for $j\neq i$. Of these there are $\eta^k\theta(m-1)$. The multiplication by $(m-1)$ accounts for all the possible choices of $k$, while the multiplication by $\theta$ accounts for the added backtracking steps. 
\end{itemize}
Overall, by summing these two contributions, it follows that $(q_{2k+3}(A))_{1i}=\eta^k(\theta^2+\theta(m-1)) = \eta^{k+1}$, which concludes this part of the proof.
\item[(ii)] Let $k\geq 2$, then from Theorem~\ref{thm:qres} 
\begin{align*}
q_{2k}(A) &= q_{2k-1}(A)A + (1-\theta)q_{2(k-1)}(A)B \\
		&= \eta^{k-1}A^2 + (1-\theta)[q_{2k-3}(A)A+(1-\theta)q_{2(k-2)}(A)B]B \\
		&=\eta^{k-1}A^2 + (1-\theta)\eta^{k-2}A^2B + (1-\theta)^2q_{2(k-2)}(A)B^2 \\
		&=A^2\sum_{j=0}^{k-3}\eta^{k-1-j}((1-\theta)B)^j+(1-\theta)^{k-2}[q_3(A)A+(1-\theta)q_2(A)B]B^{k-2}\\
		&=A^2\sum_{j=0}^{k-1}\eta^{k-1-j}((1-\theta)B)^j - (1-\theta)^kDB^{k-1} \\
		&=\eta^{k-1}A^2\left(I - \left(\frac{1-\theta}{\eta}B\right)^k\right)\left(I - \frac{1-\theta}{\eta}B\right)^{-1} - (1-\theta)^kDB^{k-1},
\end{align*}
where we have used the fact that  
\[
B= (1-\theta)I-D = -\begin{bmatrix} 
\theta+m-1 &  \\
  & \theta I \\
\end{bmatrix}
\] 
and hence the matrix $I - \frac{1-\theta}{\eta}B$ is invertible.
\end{itemize}

\medskip

\noindent
\textbf{Proof of Corollary~\ref{cor:Katzstar}}

Corollary~\ref{cor:Katzstar} now follows 
after inserting the
expressions for $q_k(A)$ in Theorem~\ref{thm:star} 
into (\ref{eq:katzibtdw}) 
and simplifying, for all $\theta\in(0,1]$. When $\theta = 0$, i.e., in the fully nonbacktracking setting,
the result follows from a direct computation, since the length of the longest possible nonbacktracking walk in $S_{1,m}$ is two:
\[
x_1 = 1 + \alpha m,\quad \text{ and }\quad x_i = 1 + \alpha +\alpha^2 (m-1) 
\] 
for all $i=2,\ldots,m+1$. 

\bibliographystyle{siam}

\end{document}